\titleformat*{\section}{\large\bfseries}
\titleformat*{\subsection}{\bfseries}
\def\qed{\hfill $\Box$}                              	
\newtheorem{defn}{Definition}
\newtheorem{theorem}{Theorem}
\newtheorem{proposition}{Proposition}
\newtheorem{lemma}{Lemma}
\begin{document}
	
	\thispagestyle{empty}
	\def\thefootnote{\fnsymbol{footnote}}
	
	\begin{center}
		{\Large\bf Decisions over Sequences: Computability and Choice}\footnote{We are grateful to Debasis Mishra, Arunava Sen and Ariel Rubinstein for their guidance and encouragement. For useful comments, we thank Siddharth Barman, Umang Bhaskar, Pradeep Dubey, Bhaskar Dutta, Faruk Gul, Sean Horan, Ehud Lehrer, Stephen Morris, Yusufcan Masatlioglu, Herv\'{e} Moulin, Yuval Salant, Eran Shmaya, R Ramanujam, Rahul Roy, Yves Sprumont and seminar participants at various conferences in which this paper was presented.
			Bhardwaj acknowledges financial support under PRIN project 20222Z3CR7  “Nudging under Limited Attention”. Chatterjee acknowledges financial support from the UKRI Frontier Research grant with grant number	EP/Z001528/1.	All errors are our own.
		}
	\end{center}
	
	\setcounter{footnote}{0}
	\def\thefootnote{\alph{footnote}}
	
	\vspace*{.1cm}
	\begin{center}
		\begin{tabular}{ccc}
			{\sc Bhavook Bhardwaj} & $\quad$ & {\sc Siddharth Chatterjee} \\
			{\small Universit\`{a} Ca' Foscari Venezia} & &  {\small University of Essex} \\
			{\small\tt bhavookb21@gmail.com} & & {\small\tt 123sidch@gmail.com}\\
		\end{tabular}
	\end{center}
	\vspace*{0.2cm}
	\begin{center}
		June 2026
	\end{center}
	\vspace*{0.4cm}
	
	{\small
		\noindent {\bf Abstract.} 	
		We develop a framework to study situations where decision makers face alternatives sequentially. Within this framework, we focus on endogenous stopping behavior using two broad classes of decision rules: \textit{stopping rules} and \textit{bounded stopping rules}. We establish the equivalence of these two classes and examine two of its implications. First, focusing on the procedural aspects of decision making, we  define \textit{computable} rules using the model of a Turing machine. Our equivalence result enables us to show that computable rules are implementable by finite automata. Second, we extend the setup of abstract choice theory beyond choice from sets and finite lists, to that from \textit{infinite sequences} of alternatives. The equivalence result allows us to derive \textit{testable implications} of choice behavior. We develop a revealed-preference ``toolkit'' and use it to characterize a threshold-based and a satisficing choice procedure.

		\medskip\par\noindent
		{\bf Keywords:\/} decision rules, bounded rationality, sequences, computability
		
		\medskip\par\noindent
		{\bf JEL Classification Numbers:\/} D01, D09
	}
	
	\setcounter{footnote}{0}
	\def\thefootnote{\arabic{footnote}}
	
	\bigskip
	
	\section{Introduction}\label{sec:introduction}

	To capture situations of decision making where the ordering of alternatives can affect final decisions,  \cite{rubinstein2006model} enriched the classical model of abstract choice and introduced choice functions over \textit{lists} which are ordered sets. However, there are many situations where the decision maker (DM) has to also decide on when to stop scanning alternatives as they keep on being presented. Examples of such situations include processing information, receiving recommendations, job search, meeting people etc. Such situations require \textit{endogenous} stopping in addition to making a decision and are not fully captured by the setup of finite lists\textemdash the last entry of the list is an exogenous stopping point.

	To model the above described situations, we introduce a general framework of decision making where a DM--represented by a \textit{decision rule}--processes infinite sequences of alternatives. In order to capture the notion of endogenous stopping, we require the DM to have a stopping ``point'' for every conceivable sequence and the decision is made by looking only at the string of alternatives that appear before the stopping point. That is, decisions are non-anticipatory-- contingent only upon the history of alternatives observed until the stopping point. Decision rules with this property are called \textit{stopping rules}. This is a broad class of rules which can accommodate, among other things, ``satisficing" behavior in this framework (see Section \ref{sec:testability}). 
	
	In the standard notion of a stopping rule, a DM may not stop at a finite point for some sequences. For instance, consider a job searcher who is receiving i.i.d. wage offers \`{a} la \cite{mccall1970economics}. The searcher's behavior in the optimal strategy requires stopping for a sequence only if it contains a wage offer above a reservation/cutoff wage; otherwise, the searcher does not stop and gets some outside option such as an ``unemployment benefit" in each period. This allows for a zero measure set of sequences on which the stopping time is infinite. However, we additionally require the DM to stop for \textit{every} sequence. While this may seem restrictive at a first glance, it is a natural requirement in many situations as we briefly outline next.

	First, the ``value" of the outside option may be low. In such cases, continuing forever may not be rational due to the implicit tradeoffs. Second, there may be implied costs associated which blow up as some underlying state evolves over time. Third, in a model with uncertainty, the DM may have an imperfect understanding of the environment which results in deteriorating beliefs over time. Fourth, our requirement of stopping for every sequence translates to the decisiveness of a decision rule when alternatives are examined sequentially over time or space. While some of the above suggested channels may induce stopping even with a Bayesian decision maker, other attitudes towards uncertainty such as ambiguity aversion or regret minimization would a fortiori induce stopping on all sequences.  
	
	Within the class of stopping rules, we consider a further subclass of decision rules called \textit{bounded stopping rules}. These rules require a ``global" finite bound on the stopping points. In other words, there exists a point such that for every sequence, the decision is made \textit{within} that point. This requirement is stricter than the requirement of stopping rules where we only require a ``pointwise" bound. These rules are of interest due to at least two possible reasons.
	
	First, finite attention that is indicative of cognitive limitations of a human being and finite processing capacity that is indicative of computational constraints of a machine would force stopping before a certain point irrespective of the sequence. The common bound required for bounded stopping rules can be seen as an analogue of a ``consideration set" in our framework. That is, the DM does not consider anything beyond the common bound irrespective of the ``menu" (a sequence). This is in line with the large literature on bounded rationality in choice that studies how these endogenous constraints affect choice behavior. Second, specific features of the situation that is modeled can induce endogenously a finite bound on stopping. For instance, consider the price search model of \cite{9b4ebbb8-234a-34c9-9343-c596e15cbb9a} which is formally similar to the job search model of \cite{mccall1970economics}. However, an important distinction is that of imperfect knowledge on part of the DM of the underlying distribution of the prices (as against wages). This distinction in the form of a ``second-order" uncertainty forces optimal search to terminate in finite time and therefore is a bounded stopping rule. 
	
	While every bounded stopping rule is a stopping rule by definition, our first result shows that the converse is also true. That is, the two classes are equivalent. We call this the ``Reduction Lemma''.  This stands in contrast to the lists framework of \cite{rubinstein2006model} and we can show that if we allow for lists of arbitrary size, this equivalence breaks down. In this paper, we study two implications of this result.
	First, we examine an implication of this result on computational and procedural aspects of decision making. Computational considerations are a recurring theme in economic theory. In our setup, we study these aspects using the model of a Turing Machine. A Turing machine is an abstract model of computation that embodies the idea of a procedural description of decision rules. We call a decision rule \textit{computable} if it can be implemented by a Turing machine. As it turns out, stopping rules and computable rules are equivalent. Next, we look at a model of computation that has been used in economic theory and game theory to model aspects of bounded rationality. This is the model of a finite automaton. Following the definition of computable decision rules, we define automaton-implementable rules analogously. In any ``finitary'' setup, decision rules implementable by Turing machine coincide with decision rules implementable by finite automata. However, this is not true if we allow for lists of arbitrary size in the framework introduced by \cite{rubinstein2006model}.  We show that due to the Reduction Lemma, computable rules coincide with finite automaton implementable rules in our setup.

	The second implication of our result concerns \textit{testability} and revealed preference aspects of choice heuristics. As is the case with certain theories of decision making in an ``infinite'' setup, axioms characterizing the theory may not be testable. A case in point is the Expected Utility theory where the axiom of \textit{continuity} is not testable experimentally. Similar problems can arise in our setup when one aims to empirically refute certain procedures of choice within the subclass of stopping rules. However, the Reduction Lemma ensures that all the axioms characterizing those stopping rules are testable by effectively making ours a ``finite'' setup.
	
	As highlighted above, stopping rules are a broad class of rules that allow for studying various types of behaviors. We introduce two natural choice procedures within the class of stopping rules: a threshold based choice rule and a satisficing rule. In order to develop revealed preference tests for choice behavior, a key challenge is to provide a ``language" in which axioms or testable conditions can be stated. To that end, we develop a revealed preference ``toolkit". This is essentially a set of concepts using which we state axioms on stopping rules. To illustrate, one of the concepts is that of ``sufficiency" of a finite segment. For a given decision rule, we say that a finite segment is \textit{sufficient} if the choice is the same for all ``extensions" of this segment. That is, any sequence that contains the same initial segment produces the same choice. Similarly, we define the notion of "minimal sufficiency" if it is sufficient and no further "truncation" of it is sufficient. These notions are closely related to the ones used in statistics. Using these notions, we develop our revealed preference toolkit and demonstrate its applicability by axiomatically characterizing the two choice rules.

	The layout of the paper is as follows. Section \ref{sec:setup} introduces the setup and the Reduction Lemma is stated and proved. Section \ref{sec:comp} introduces computable and finite automaton implementable rules and shows their equivalence in our setup and non-equivalence in the lists setup. Section \ref{sec:choice} develops the revealed preference toolkit followed by a characterization of two choice procedures in Section \ref{sec:testability}. Section \ref{sec: conclusion} provides a discussion of the related literature and concludes.

	\section{Setup and The Reduction Lemma}\label{sec:setup}
	Let $X$ be a non-empty finite set of alternatives. A \textit{sequence} is a map $S : \mathbb{N} \to X$, where $\mathbb{N}$ denotes the set of natural numbers. By $X^{\mathbb{N}}$ we denote the collection of all $X$-valued sequences. That is, $ X^{\mathbb{N}} \coloneqq \{S \ | \ S : \mathbb{N} \to X\}$. The term $S(i)$ corresponds to the $i^{th}$ entry of the sequence $S$. A \textit{segment} is any map $M: [k] \to X$, where $[k] \coloneqq \{1, \ldots, k\}$ for some $k \in \mathbb{N}$. Let the set of all segments of length $k$ be denoted by $\mathcal{S}_k$ and the set of all segments be denoted by $\mathcal{S}$. Consider any subset $E$ of natural numbers and a sequence $S$. We define the \textit{restriction} of $S$ to $E$ as the map $S|_E : E \to X$ where $[S|_E](i) = S(i)$ for all $i \in E$. When $E=[k]$ for some $k \in \mathbb{N}$, the segment $S|_{[k]}$ is called the \textit{truncation} of $S$ at $k$. We will abuse notation and write $S|_k$ instead of $S|_{[k]}$ whenever no confusion arises. For any $S,T \in X^{\mathbb{N}}$ and $k \in \mathbb{N}$, we define the \textit{concatenation} of the segment $S|_k$ and the sequence $T$ to be the sequence $S|_k\cdot T \in X^{\mathbb{N}}$ such that $[S|_k \cdot T](i) = S(i)$ for all $ i \in [k]$ and $ [S|_k \cdot T](i) = T(i-k)$ for all $i \in \{k+1, \ldots \}$. Concatenation of two segments is defined in a similar manner.
	We denote the set of all \textit{decisions} by a non-empty set $Y$. In particular, $Y$ can be equal to $X$. The DM in our model is represented by a decision rule, $d$, which gives a unique decision for every infinite sequence. Formally, it is defined as follows.
	\begin{defn}
		A decision rule on sequences is any map $d: X^{\mathbb{N}} \to Y$.
	\end{defn}
	
	A decision rule is more general than a ``choice'' rule as we do not restrict the decision to be a part of the input sequence i.e. we do not require $d(S)=S(i)$ for some $i \in \mathbb{N}$ (we put this additional requirement when we study choice behavior in Section \ref{sec:choice}). To illustrate its generality, suppose $X = \{0,1\}$ and $Y = \{\text{TRUE}, \text{FALSE}\}$. Consider a DM that is a computer program receiving bitstreams that represent expressions in a natural language (for instance, English) encoded in binary expression i.e. $0's$ and $1's$. For every input bitstream, the program declares it as ``TRUE" if it contains a grammatically correct sentence. It outputs ``FALSE" otherwise. This is a valid decision rule but would not involve stopping for all sequences. In what follows next, we focus on decision rules that capture the notion of endogenous stopping.
	
	\subsection{Stopping and bounded-stopping rules}
	
	Stopping rules capture the idea that for any given sequence, the DM does not wait indefinitely and ``makes up its mind" by a finite amount of time i.e. after viewing a finite initial segment and the subsequent alternatives of the sequence do not affect the decision.
	Formally, they are defined as follows.
	
	\begin{defn}
		A decision rule $d$ is a stopping rule if for all $ S \in X^{\mathbb{N}}$, there exists a $k \in \mathbb{N}$ such that for all $T \in X^{\mathbb{N}}$ with $T|_k = S|_k$,  we have $d(S) = d(T) $.
	\end{defn}
	
	To show that not every decision rule is a stopping rule, consider the following simple example. Let $X=\{x^*,y\}$. The decision rule $d$ is defined as $d(S)= x^*$ if $x^* = S(i)$ for some $i \in \mathbb{N}$ and $d(S)=S(1)$ otherwise. Consider any sequence $S$ that does not feature $x^*$ in it. It can be observed that for any $k \in \mathbb{N}$, we can find a $T \in X^{\mathbb{N}}$ that features $x^*$ in its $k+1^{\text{th}}$ entry and therefore there does not exist a $k$ for $S$ as required in the definition of a stopping rule. If our interpretation of a decision rule is that the sequence is examined by the DM sequentially \textemdash in discrete time for instance \textemdash then such a decision rule looks implausible since for the sequences that do not feature $x^*$, the DM will never stop and would have to wait ``forever" to make a decision.

	It is important to note the stopping point or the ``relevant" finite segment for stopping rules can depend on the sequence. Since the set of sequences is infinite, the lengths of these relevant segments are not guaranteed to have a finite upper bound. A subclass of stopping rules for which these lengths have a finite upper bound are called bounded stopping rules. 
	
	\begin{defn}
		A decision rule $d$ is a bounded stopping rule if there exists a $k \in \mathbb{N}$ such that for all $S,T \in X^{\mathbb{N}}$, if $S|_k = T|_k$, then we have $d(S) = d(T)$.
	\end{defn}
	
	While stopping rules require for every sequence, the existence of a finite bound on the ``consideration" of the DM, bounded stopping rules require a fixed finite bound on the consideration for \textit{every} sequence. That is, there is a change in the order of quantifiers in the definition of the two subclasses of decision rules. The following is a simple example of a bounded stopping rule: The DM is endowed with a preference order $ \succ$ over $X$, and for any sequence, she considers only the first 10 alternatives if the first element of the sequence is some designated $x^* \in X$ and picks the $ \succ$-maximal alternative from them. Otherwise, she looks at the first 20 alternatives and picks the $ \succ$-maximal alternative from them.   
	

	\subsection{The Reduction Lemma}
	
	Our main result establishes the equivalence of stopping and bounded stopping rules. Before stating and proving the result, we first provide an alternative definition of stopping rules using the following useful object which is defined for any decision rule $d$.
	$$	k_d(S)\coloneqq\inf\big\{k\in\mathbb{N}:d(S)=d(S\vert_k\cdot T)\text{ for all }T\in X^{\mathbb{N}} \big\}$$
	The function $k_d(\cdot)$ is the \textit{stopping time} for the sequence $S$ and captures the smallest truncation of a sequence $S$ beyond which the terms of the sequence do not affect decisions. Using $k_d$, we redefine stopping and bounded stopping rules as follows (with the convention that $n < \infty$ for all $n \in \mathbb{N}$).

	\begin{defn}
		A decision rule $d$ is a
		\begin{itemize}
			\item[(i)] Stopping rule if $k_d(S)<\infty$ for every $S \in X^{\mathbb{N}}$.  
			\vspace{-0.1cm}
			\item[(ii)]Bounded stopping rule if $\sup\{k_d(S) : S \in X^{\mathbb{N}}\}<\infty$.
		\end{itemize} 	
	\end{defn}
	
	While it is clear by the definition above that every bounded stopping rule is a stopping rule, we now show that the converse is also true. 
	
	\begin{theorem}
		Every stopping rule is a bounded stopping rule.
	\end{theorem}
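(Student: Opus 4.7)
The plan is to argue by contradiction, combined with a diagonal extraction that crucially exploits the finiteness of $X$. Suppose $d$ is a stopping rule but not a uniform-stopping rule, so that $\sup_S k_d(S) = \infty$; then for each $n \in \mathbb{N}$ I can pick $S_n \in \mathcal{S}$ with $k_d(S_n) \geq n$. The central idea is to distill from $(S_n)$ a single sequence $S^* \in \mathcal{S}$ that will produce a contradiction when the stopping-rule hypothesis is applied to $S^*$ itself.

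The construction of $S^*$ is the heart of the argument and the one step where finiteness of $X$ is genuinely used. By the pigeonhole principle there exists $x_1 \in X$ with $S_n(1) = x_1$ for $n$ ranging over some infinite $N_1 \subseteq \mathbb{N}$; among such indices, some $x_2 \in X$ satisfies $S_n(2) = x_2$ for $n$ in an infinite $N_2 \subseteq N_1$; iterating, I get a decreasing nested sequence $N_1 \supseteq N_2 \supseteq \cdots$ of infinite sets and symbols $x_1, x_2, \ldots \in X$. Setting $S^*(k) := x_k$ yields an $S^* \in \mathcal{S}$ such that for every $k$, infinitely many $n$ satisfy $S_n|_k = S^*|_k$. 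Equivalently, this is compactness of $X^{\mathbb{N}}$ in the product topology, or K\"onig's lemma applied to the finitely branching tree of common prefixes of the $S_n$.

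To close the argument, let $m := k_d(S^*)$, which is finite since $d$ is a stopping rule. By definition of $k_d$, one has $d(S^*|_m \cdot T) = d(S^*)$ for every $T \in \mathcal{S}$. Now choose any $n > m$ with $S_n|_m = S^*|_m$, which exists by construction of $S^*$. Then for every $T \in \mathcal{S}$, $d(S_n|_m \cdot T) = d(S^*|_m \cdot T) = d(S^*) = d(S_n)$, where the last equality follows by taking $T$ to be the tail of $S_n$ past position $m$. Hence $k_d(S_n) \leq m < n$, contradicting $k_d(S_n) \geq n$.

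The main obstacle, as flagged above, is engineering the limit sequence $S^*$; once it is in hand, the contradiction is routine unpacking of the definition of $k_d$. Without finiteness of $X$ the pigeonhole step would fail at the very first level, and in fact the theorem itself would no longer hold.
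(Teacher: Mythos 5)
Your proposal is correct and follows essentially the same route as the paper's elementary proof: assume no uniform bound, extract by pigeonhole (finiteness of $X$) a diagonal limit $S^*$ agreeing with infinitely many $S_n$ on every initial segment, and then contradict the finiteness of $k_d(S^*)$ by noting that any $S_n$ with $n > k_d(S^*)$ sharing that prefix would have to stop by $k_d(S^*)$. Your selection of witnesses ($k_d(S_n)\geq n$ directly) is a slight streamlining of the paper's iterative construction of the partition $\{\mathcal{A}_j\}$, but the argument is the same in substance.
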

	\begin{proof}
		Let $d:X^{\mathbb{N}}\rightarrow X$ be a stopping rule. Suppose, for the sake of contradiction, $d$ is not a bounded stopping rule. The proof is organized in three steps.
		\vspace{0.2cm}
		
		\indent\textit{Step 1}: We iteratively define a sequence of pairs $\{(k_j,\mathcal{A}_j)\}_{j\in\mathbb{N}}$, where $k_j\in\mathbb{N}$ and $\mathcal{A}_j\subseteq X^{\mathbb{N}}$, as follows:
		\begin{enumerate}
			\item	Let $k_1\coloneqq\inf\{k_d(S):S\in X^{\mathbb{N}}\}$ and $\mathcal{A}_1\coloneqq\{S\in X^{\mathbb{N}}:k_d(S)=k_1\}$.
			\vspace{-0.1cm}
			
			\item	For any $j\in\mathbb{N}\setminus\{1\}$, assuming $(k_l,\mathcal{A}_l)$ have already been defined for every $l\in\{1,\hdots,j-1\}$, let
			\begin{align*}
				k_j &\coloneqq\inf\{k_d(S):S\in X^{\mathbb{N}}\setminus\cup_{l=1}^{j-1}\mathcal{A}_l\}\text{, and }	\\
				\mathcal{A}_j &\coloneqq\{S\in X^{\mathbb{N}}\setminus\cup_{l=1}^{j-1}\mathcal{A}_l:k_d(S)=k_j\}
			\end{align*}
		\end{enumerate}
		The sets $\mathcal{A}_j$ refer to the set of all the sequences (henceforth inputs\footnote{ Since our proof involves constructing sequences of sequences of alternatives and subsequences of those sequences, we use the term ``input" to denote a sequence of alternatives to avoid any confusion.}) for which the stopping time is $k_j$.
		\noindent From our supposition that $d$ is stopping rule and $d$ does not have a finite bound on the set of stopping times, the following properties are immediate:
		\begin{enumerate}[(a)]
			\item	For each $j\in\mathbb{N}$, $k_j\in\mathbb{N}$ and $\mathcal{A}_j\neq\varnothing$.
			\item	$k_1<k_2<\hdots<k_j<\hdots$ and so on. Further, $k_i \geq i$ for all $ i \in \mathbb{N}$.
			\item	$\{\mathcal{A}_j:j\in\mathbb{N}\}$ is a partition of $X^{\mathbb{N}}$.
		\end{enumerate}
		These properties shall be referred to in the rest of the argument.
		
		\begin{figure}
			\begin{center}	
				\begin{tikzpicture}
					\draw[thick] (7,-0.5) --  (1,-0.5) -- (1,0) --  (7,0);
					\draw[thick] (1.4,-0.6)--(1.4,0.1)--(2.1,0.1)--(2.1,-0.6)--(1.4,-0.6);
					\node[thick] at (1.75,0.3) {\footnotesize $k_1$};
					\foreach \i in {1.5,2,2.5,3,3.5,4,4.5,5,5.5,6}
					{\draw[thick] (\i,-0.5) -- (\i,0);} 
					\node[thick] at (6.5,-0.25) {$\cdots$};
					\node[thick] at (0.7,-0.25) {$S_1$};
					\draw[thick] (7,-1.5) -- (1,-1.5) -- (1,-1) --  (7,-1);
					\draw[thick] (2.9,-1.6)--(2.9,-0.9)--(3.6,-0.9)--(3.6,-1.6)--(2.9,-1.6);
					\node[thick] at (3.25,-0.7) {\footnotesize $k_2$};	
					\node[thick] at (3.9,-1.9) { $\ddots$};
					\node[thick] at (4.35,-2.2) { $\ddots$};
					\foreach \i in {1.5,2,2.5,3,3.5,4,4.5,5,5.5,6}
					{\draw[thick] (\i,-1.5) -- (\i,-1);} 
					\node[thick] at (6.5,-1.25) {$\cdots$};
					\node[thick] at (0.7,-1.25) {$S_2$};
					\node[thick] at (0.7,-1.8) {$\vdots$};
					\node[thick] at (0.7,-2.2) {$\vdots$};
					\node[thick] at (0.7,-2.6) {$\vdots$};	
					\draw[thick] (7,-3.5) -- (1,-3.5) -- (1,-3) --  (7,-3);
					\draw[thick] (4.9,-3.6)--(4.9,-2.9)--(5.6,-2.9)--(5.6,-3.6)--(4.9,-3.6);
					\node[thick] at (5.25,-2.7) {\footnotesize $k_j$};	
					\node[thick] at (6.1,-3.8) { $\ddots$};
					\node[thick] at (6.55,-4.1) { $\ddots$};
					\foreach \i in {1.5,2,2.5,3,3.5,4,4.5,5,5.5,6}
					{\draw[thick] (\i,-3.5) -- (\i,-3);} 
					\node[thick] at (6.5,-3.25) {$\cdots$};
					\node[thick] at (0.7,-3.25) {$S_j$};
					\node[thick] at (0.7,-3.8) {$\vdots$};
					\node[thick] at (0.7,-4.2) {$\vdots$};
					\node[thick] at (0.7,-4.6) {$\vdots$};	 	 
				\end{tikzpicture}
				\caption{A sequence of inputs $\{S_i\}_{i \in \mathbb{N}}$ with increasing stopping times}
			\end{center}
		\end{figure}
		
		\indent\textit{Step 2}: For every $j \in \mathbb{N}$, pick an arbitrary $S_j\in\mathcal{A}_j$. This generates a sequence $\{S_i\}_{i \in \mathbb{N}}$ of inputs such that the stopping time for each $S_j$ is $k_j$. By property (b), we know that this corresponds to an increasing sequence of stopping times. 
		Now, we construct a subsequence $\{S^*_i\}_{i \in \mathbb{N}}$ of the above sequence with the following progressive ``agreement" property: For all $k \in \mathbb{N}$, we have 
		$S_k^*|_{k} = S_j^*|_{k}$ for all $j \geq k$. To do this, we use the following lemma.
		\begin{lemma}
			For any $k \in \mathbb{N}$ and a sequence of inputs $\{T_{i}\}_{i \in \mathbb{N}}$ where $T_i \in X^{\mathbb{N}}$, there exists a subsequence $\{T_{k_i}\}_{i \in \mathbb{N}}$ such that $T_{k_m}|_k = T_{k_n}|_k$ for all $m,n \in \mathbb{N}$.
		\end{lemma}
		\begin{proof}
			Consider any $k \in \mathbb{N}$. Since $X$ is finite, the number of possible segments of length $k$ is $|X|^k$. Since  $\{T_{i}\}_{i \in \mathbb{N}}$ is an infinite collection of inputs, by the pigeonhole principle, there exists at least one segment of length $k$, say $M$, that is repeated infinitely often and  therefore we can construct a subsequence $\{T_{k_i}\}_{i \in \mathbb{N}}$ , such that $T_{k_i}|_k = M$ for all $i \in \mathbb{N}$. 
		\end{proof}
		
		Now using the above lemma, we recursively define an indexed collection of sequences of inputs $\{\{S_{k_i}\}_{i \in \mathbb{N}}\}_{k \in \mathbb{N}}$ as follows: 
		\begin{itemize}
			\item For $k=1$ and $\{S_i\}_{i\in \mathbb{N}}$, applying Lemma 1 we get a subsequence $\{S_{1_i}\}_{i \in \mathbb{N}}$ such that $S_{1_i}|_1 = S_{1_j}|_1$ for all $i,j \in \mathbb{N}$. 
			\item For $k \geq 2$, applying Lemma 1 on the sequence $\{S_{(k-1)_i}\}^{\infty}_{i=2}$, we get a subsequence $\{S_{k_i}\}_{i \in \mathbb{N}}$ such that $S_{k_i}|_{k}= S_{k_j}|_{k}$ for all $i, j \in \mathbb{N}$.
		\end{itemize}
		
		\begin{figure}
			\begin{center}
				\begin{tikzpicture}
					\draw[thick] (7,-0.5) --  (1,-0.5) -- (1,0) --  (7,0);
					\draw[thick] (1.9,-0.6)--(1.9,0.1)--(2.6,0.1)--(2.6,-0.6)--(1.9,-0.6);
					\draw[thick,cyan] (1.05,-5) --(1.05,0.05) --(1.45,0.05) -- (1.45,-5);
					\node[thick] at (2.25,0.30) {\tiny $k_d(S^*_1)$};
					\foreach \i in {1.5,2,2.5,3,3.5,4,4.5,5,5.5,6}
					{\draw[thick] (\i,-0.5) -- (\i,0);} 
					\node[thick] at (6.5,-0.25) {$\cdots$};
					\node[thick] at (0.7,-0.25) {$S^*_1$};
					\node[thick] at (1.3, -0.30) {\footnotesize $x$};
					\draw[thick] (7,-1.5) -- (1,-1.5) -- (1,-1) --  (7,-1);
					\draw[thick] (2.9,-1.6)--(2.9,-0.9)--(3.6,-0.9)--(3.6,-1.6)--(2.9,-1.6);
					\draw[thick,cyan] (1.55,-5) --(1.55,-1.05) --(1.95,-1.05) -- (1.95,-5);
					\node[thick] at (3.25,-0.7) {\tiny $k_d(S^*_2)$};	
					\node[thick] at (3.9,-1.9) { $\ddots$};
					\node[thick] at (4.40,-2.25) { $\ddots$};
					\foreach \i in {1.5,2,2.5,3,3.5,4,4.5,5,5.5,6}
					{\draw[thick] (\i,-1.5) -- (\i,-1);} 
					\node[thick] at (6.5,-1.25) {$\cdots$};
					\node[thick] at (0.7,-1.25) {$S^*_2$};
					\node[thick] at (1.3, -1.3) {\footnotesize $x$};
					\node[thick] at (1.75, -1.3) {\footnotesize $y$};
					\node[thick] at (0.7,-1.8) {$\vdots$};
					\node[thick] at (0.7,-2.2) {$\vdots$};
					\node[thick] at (0.7,-2.6) {$\vdots$};	
					\draw[thick] (7,-3.5) -- (1,-3.5) -- (1,-3) --  (7,-3);
					\draw[thick] (4.9,-3.6)--(4.9,-2.9)--(5.6,-2.9)--(5.6,-3.6)--(4.9,-3.6);
					\draw[thick,cyan] (2.55,-5) --(2.55,-2.05) --(2.95,-2.05) -- (2.95,-5);
					\node[thick] at (2.3,-1.75){$\ddots$};
					\node[thick] at (3.25,-2.35){$\ddots$};
					\draw[thick,cyan] (3.55,-5) --(3.55,-3.05) --(3.95,-3.05) -- (3.95,-5);
					\node[thick] at (5.25,-2.7) {\tiny $k_d(S^*_j)$};	
					\node[thick] at (6,-3.8) { $\ddots$};
					\node[thick] at (6.50,-4.15) { $\ddots$};
					\foreach \i in {1.5,2,2.5,3,3.5,4,4.5,5,5.5,6}
					{\draw[thick] (\i,-3.5) -- (\i,-3);} 
					\node[thick] at (6.5,-3.25) {$\cdots$};
					\node[thick] at (0.7,-3.25) {$S^*_j$};
					\node[thick] at (1.3, -3.3) {\footnotesize $x$};
					\node[thick] at (1.75, -3.3) {\footnotesize $y$};
					\node[thick] at (2.25, -3.3) {\footnotesize $z$};
					\node[thick] at (2.75, -2.3) {\footnotesize $w$};
					\node[thick] at (2.75, -3.3) {\footnotesize $w$};
					\node[thick] at (0.7,-3.8) {$\vdots$};
					\node[thick] at (0.7,-4.2) {$\vdots$};
					\node[thick] at (0.7,-4.6) {$\vdots$};	 
					
					\draw[thick,cyan] (4.55,-5) --(4.55,-4.05) --(4.95,-4.05) -- (4.95,-5);	
					\draw[thick] (7,-6) -- (1,-6) -- (1,-5.5) --  (7,-5.5);
					\node[thick] at (4.75,-4.25){\footnotesize $a$};
					\foreach \i in {1.5,2,2.5,3,3.5,4,4.5,5,5.5,6}
					{\draw[thick] (\i,-6) -- (\i,-5.5);} 
					\node[thick] at (6.5,-5.75) {$\cdots$};
					\node[thick] at (0.7,-5.75) {$S^*$};
					\node[thick] at (1.3, -5.8) {\footnotesize $x$};
					\node[thick] at (1.75, -5.8) {\footnotesize $y$};
					\node[thick] at (2.25, -5.8) {\footnotesize $z$};
					\node[thick] at (2.75, -5.8) {\footnotesize $w$};
					\node[thick] at (3.25, -5.8) {\footnotesize $\cdot$};
					\node[thick] at (3.75, -5.8) {\footnotesize $\cdot$};
					\node[thick] at (4.25, -5.8) {\footnotesize $\cdot$};			
					\node[thick] at (4.75,-5.8){\footnotesize $a$};			 
				\end{tikzpicture}
				\vspace{0.1cm}
				\caption{Progressive agreement of $\{S^*_i\}_{i \in \mathbb{N}}$ and the target input $S^*$}
			\end{center}	
		\end{figure}	
		
		Starting from $\{S_i\}_{i \in \mathbb{N}}$ in stage 0, at every stage $k \geq 1$, we generate a sequence of inputs such that all the inputs in the sequence have the same initial $k$-long segment. Note that this indexed collection of sequences is nested i.e. $\{S_{(k+1)_i}\}_{i \in \mathbb{N}}$ is a subsequence of  $\{S_{k_i}\}_{i \in \mathbb{N}}$ for all $k \in \mathbb{N}$. Therefore,  for any $k,l \in \mathbb{N}$ such that $ l>k$, we have $S_{k_i}|_k= S_{l_j}|_k$ for all $i,j \in \mathbb{N}$. In particular $S_{k_1}|_k= S_{l_1}|_k$. Now we define the required sequence of inputs as  follows: $S^*_k \coloneqq S_{k_1}$ for all $k \in \mathbb{N}$. That is, $S^*_k$ is equal to the first term (input) of the sequence generated at the $k^{th}$ recursion of the above definition. The sequence of inputs $\{S^*_i\}_{i \in \mathbb{N}}$ thus generated has the property that $S^*_k|_k = S^*_j|_k$ for all $j \geq k$. Further, $\{S^*_i\}_{i \in \mathbb{N}}$ is a subsequence of $\{S_i\}_{i \in \mathbb{N}}$ and hence corresponds to an increasing sequence of stopping times $k_d(S^*_1) < k_d(S^*_2) \ldots < k_d(S^*_j) < \ldots $ where $k_d(S^*_i) \geq k_i \geq i$.   Finally, we define the input $S^*\in X^{\mathbb{N}}$ as $$S^*(i) \coloneqq S^*_i(i) \ \quad \ \ \  \forall i \in \mathbb{N}$$
		
		It can be observed that due to the progressive ``agreement", the sequence of inputs $\{S^*_i\}_{i \in \mathbb{N}}$ ``converges" to the input $S^*$ i.e. for any $k \in \mathbb{N}$, $S^*|_k = S^*_j|_k$ for all $j \geq k$. 
		
		\vspace{0.2cm}
		
		\indent\textit{Step 3:} Since $d$ is a stopping rule, there must exist $k^* \in \mathbb{N}$ such that 
		\begin{equation}
			d(S^*) = d(S^*|_{k^
				*}\cdot T) \ \ \forall \ \ T \in X^{\mathbb{N}}
		\end{equation}
		
		Consider any $l > k^*$. Note that since $k_d(S^*_1) < k_d(S^*_2) \ldots < k_d(S^*_j) < \ldots$, there exists $S^*_l \in \{S^*_i\}_{i \in \mathbb{N}}$ such that $k_d(S^*_l) > k^*$. By the definition of $S^*$ and the progressive agreement property of $\{S^*_i\}_{i\in \mathbb{N}}$, we know that $S^*|_k = S^*_l|_k $ for all $k \leq l$. In particular $S^*|_{k^*} = S^*_l|_{k^*} $ and therefore we can write $S^*_l$ as the concatenation of  $S^*|_{k^*}$ and $T' \in X^{\mathbb{N}}$, where $T'(i) =  S^*_l(k^* + i)$ for all $i \in \mathbb{N}$. But then by (1), we have 
		
		$$d(S^*_l) = d([S^*|_{k^*}] \cdot T' ) = d([S^*|_{k^*}] \cdot T) \ \ \ \ \forall T \in X^\mathbb{N}$$
		which implies that $k^*$ is the stopping time for $S^*_l$, a contradiction. Therefore our initial supposition that $d$ is not a bounded stopping rule is wrong implying that $d$ must be a bounded stopping rule.		
	\end{proof}
	
	\subsection{Some remarks}
	
	Our proof relies on a diagonalization argument to construct the sequence $S^*$ and there are two critical ingredients in the proof. First, the finiteness of $X$ enables us to establish Lemma 1 and second, the assumption of full domain, $X^\mathbb{N}$, allows us to construct the target sequence $S^*$ which leads to the final contradiction. There exists a topological approach to our result in which we can show that for any stopping rule $d$, the function $k_d : X^{\mathbb{N}} \to \mathbb{N}$ is continuous when $X^{\mathbb{N}}$ and $\mathbb{N}$ are endowed with the product topology and the discrete topology respectively. Using the Tychonoff Theorem, we can then observe that $X^{\mathbb{N}}$ is compact and hence the function $k_d$ is uniformly continuous. The uniform continuity of $k_d$ gives us the finite bound on the values of $k_d$.


	The equivalence of stopping and bounded stopping rules highlights the fact that with the assumption of endogenous stopping, we end up showing that only a finite number of segments are ``relevant" in decision making. While this finiteness provides appropriate grounds for testability of various decision procedures (see Section \ref{sec:choice}), it also provides surprising results when we explore computational aspects of decision making in our setup.  This is the content of the next section.
	
	\section{Computability and Bounded Rationality}\label{sec:comp}

	
	It is widely accepted that cognitive limitations and computational constraints have an important role in the decision making process.  While the standard notion of rationality that is synonymous with unrestricted maximization assumes no such constraints, there are a variety of settings where the assumption of infinite processing capabilities is an unrealistic one. As \cite{richter1999computable} remark 
	\begin{quote} 
		\begin{spacing}{1}
			``Can human beings really work with arbitrarily complex preferences, utility	functions, and technologies as classical economic theory assumes? \ldots real people, using ‘realistic’ languages, cannot communicate arbitrary real-number quantities and prices.
			Realism, then, suggests that we restrict ourselves to ‘simple’ preferences, utility functions, and technologies..."
		\end{spacing}
	\end{quote}
	
	\vspace{-0.5cm}
	
	In order to incorporate computational constraints in our model, a natural first question to ask is that what decision rules are \textit{computable}? To answer this question, we turn to an abstract model of computation: the \textit{Turing machine}. A physical description of a Turing machine involves two objects: A finite state machine and an infinite ``tape" which enables it to have an effectively ``infinite memory". According to the Church-Turing thesis, any physically realizable computer can be represented using a Turing machine and this makes it the most powerful model of computation known till date. In other words, the question of computational feasibility of a decision problem can be thought of as equivalent to that of its Turing-implementability. Therefore, we call a decision rule computable if it can be implemented by a Turing machine.
	
	\begin{figure}[ht]
		\begin{center}
			\begin{tikzpicture}[every node/.style={block},
				block/.style={minimum height=1.5em,outer sep=0pt,draw,rectangle,node distance=0pt}]
				\node (A) {$a$};
				\node (B) [left=of A] {$\ b \ $};
				\node (C) [left=of B] {$ \ b \ $};
				\node (D) [right=of A] {$\quad$};
				\node (E) [right=of D] {$\quad $};
				\node (F) [above = 0.75cm of A, draw=black,thick, scale=1.5] {\tiny Finite Control};
				\node (A') [above = 0.75cm of F]{$\ 1 \ $};
				\node (B') [left=of A'] {$\ 0 \ $};
				\node (C') [left=of B'] {$ \ 1 \ $};
				\node (D') [right=of A'] {$ \quad  $};
				\node (E') [right=of D'] {$\quad $};
				\draw[-latex] (F) -- (A);
				\draw[-latex] (F) -- (A');
				\draw (C.north west) -- ++(-1cm,0) (C.south west) -- ++ (-1cm,0) 
				(E.north east) -- ++(1cm,0) (E.south east) -- ++ (1cm,0);
				\draw (C'.north west) -- ++(-1cm,0) (C'.south west) -- ++ (-1cm,0) 
				(E'.north east) -- ++(1cm,0) (E'.south east) -- ++ (1cm,0);
			\end{tikzpicture}
			\caption{A two-tape Turing machine}
		\end{center}
	\end{figure}
	
	What we mean by implementing a decision rule is that there exists a Turing machine such that for any input (a sequence) that is fed (formally defined below) into it, the machine halts and produces the same output as the decision rule on that input.   Before describing the decision making process in our setup using a Turing machine, we first provide its formal definition.
	
	\begin{defn}
		A Turing machine  is a tuple $ TM= ( Q, X, \delta, O)$, where $Q$ is a finite set of states, $X$ is a finite set of symbols (alternatives), $ \delta : Q \times X^2 \to Q \times X \times \{L,S,R\}^2$ is a transition function and $ O : \mathcal{S} \to Y$ is an output function.\footnote{Where $\mathcal{S}$ and $Y$ are as defined in Section \ref{sec:setup}.}
	\end{defn}

	We conceptualize the DM as a Turing machine with a finite number of states denoted by $Q$ and two \textit{tapes} \textemdash input and output/working tape\textemdash which are infinite one directional line of ``cells". Each tape is equipped with a tape head. The tape head of the input tape reads the symbols on the tape one cell at a time whereas the tape head of the output tape can write or rewrite symbols to the tape one cell at a time.

	In the standard setup, the inputs to a Turing machine are finite strings from a finite ``alphabet" ($X$). The input in our setup, an infinite sequence, is written on the input tape preceded by a symbol $\triangleleft \notin X$ and the decision-making process is as follows: The symbol $\triangleleft$ \textit{initializes} the machine and it begins in some initial state $q_0$. Then, it ``parses" through the input one at a time using the transition function $\delta$. Depending on the current state and the entries under the two tape heads, the transition function determines the next state, the movement of the tape heads (left, right or stay) and the entry on the output tape. There is a designated set of \textit{terminal} states and once the machine enters a terminal state, it \textit{halts}. The decision is then made using the output function, $O$, using the segment generated on the output/working tape.  Using this notion of a Turing machine, we are now equipped to state a formal definition of computable decision rules.
	
	\begin{defn}
		A decision rule $d$ is computable, if there exists a Turing machine $TM_d$ such that for all $S \in X^{\mathbb{N}}$, (i) The Turing machine halts; and (ii) $TM_d(S) = d(S)$. 
	\end{defn}
	
	It is easy to see that not every decision rule is a computable rule. In particular, it is worth observing that rationality \textemdash defined as preference maximization \textemdash is incompatible with computability i.e. rational choice rules are not computable. To show this, consider a preference order $\succ$ over $X$ and consider any sequence $S \in X^\mathbb{N}$ such that it does not feature the $\succ$-maximal element in it. Then, no Turing machine will halt for this input. This is in line with \cite{kramer1967impossibility} who shows that when the DM suffers from computational constraints, it is impossible to display fully rational behavior.

	Having defined computability of decision rules, we now introduce another model of computation which is simpler and has been widely used to model various aspects of bounded rationality: a finite automaton.  In the context of repeated games, automata have been used to incorporate the cost of complexity of strategies (see \cite{rubinstein1986finite}) and in the context of individual decision-making, they have been used to describe the procedural aspects of decision-making (\cite{salant2011procedural}). It is formally defined as follows. 
	
	\begin{defn}
		An automaton is a tuple $A = (Q, X, \delta, O )$ where $Q$ is a finite set of states, $X$ is a finite set of symbols (alternatives), $\delta: Q \times X \to Q $ is a transition function and $O: F \to Y$ is an output function where $F \subset Q$ is the set of terminal states.   	
	\end{defn}
	
	The DM is conceptualized as an automaton in a similar way as a Turing machine. The input is written on an input tape.  It starts in an initial state $q_0 \in Q$ and reads elements of an input one at a time. However, an important difference is that the tape head can move only in one direction.
	For every input element and the current state, the transition function determines the next state of the automaton and  the tape head moves to the next element. Within the set of states is a designated set of \textit{terminal} states, denoted by $F$. Once the automaton enters one of these states, it halts. An output function $O$ then produces a decision based on the terminal state.\footnote{In \cite{salant2011procedural}, the output function requires both the state as well as the symbol under the tape head to produce the output. Such machines are called Mealy machines. Whereas our formulation is similar to the one in \cite{rubinstein1986finite} and such machines are called Moore machines.}

	In the classical theory of computation, a finite automaton is a simpler model since it does not have an infinite tape to simulate an effectively infinite memory. The following example illustrates this difference: Let $X =\{a,b, \varnothing\}$, $Y=\{\textbf{yes},\textbf{no}\}$ and $\mathcal{S}^o$ be the set of all finite segments that comprise of $a$ and $b$ with the last element being $\varnothing$ (indicating the end of the segment). The decision rule outputs ``\textbf{yes}'' for any $S \in \{a^nb^n\varnothing: n \in \mathbb{N}\} $ i.e. any segment that comprises of $n$ number of $a'$s followed by $n$ number of $b'$s for any $n \in \mathbb{N}$ and it outputs ``\textbf{no}'' otherwise. Such a decision rule can be implemented using a Turing machine. However, it cannot be implemented by a finite automaton. This example also illustrates the fact that automaton-implementable rules over \textit{lists} form a strict subclass of computable rules.
	
	\begin{figure}
		\begin{center}
			\begin{tikzpicture} [node distance = 2.6cm, on grid, auto]
				
				\node (q0) [state, initial, accepting] {$q_o$};
				\node (q1) [state, above = of q0] {$1_x$};
				\node (q2) [state, below = of q0] {$1_y$};
				\node (q3) [state, right = of q1] {$2_x$};
				\node (q4) [state, right = of q2] {$2_y$};  
				\node (q5) [state, below = of q3] {$1_x1_y$ }; 
				\path [-stealth, thick]
				(q0) edge  node {$x$} (q1)
				(q0) edge  node [left] {$y$} (q2)
				(q1) edge node[above right] {$x$} (q3)
				(q5) edge node[above right] {$x$} (q3)
				(q5) edge node[above right] {$y$} (q4)
				(q2) edge node[below right] {$x$} (q5)
				(q2) edge node[below right] {$y$} (q4)
				(q1) edge node[above right] {$y$} (q5)
				(q3) edge[loop right] node{$x,y$} (q3)
				(q4) edge[loop right] node{$x,y$} (q4);
			\end{tikzpicture}
			\caption{An automaton-implementable decision rule}
			\label{figure1}		
		\end{center}
	\end{figure}
	
	\vspace{0.2cm}
	
	Analogous to the previous definition of computable rules, we call decision rules \textit{finite automaton-implementable} or simply \textit{automaton-implementable} if they can be implemented by a finite automaton. Formally, they are defined as follows. 
	
	\begin{defn}
		A decision rule $d$ is automaton-implementable if there exists a finite automaton $A_d$ such that for all inputs $S \in X^{\mathbb{N}}$, (i) The automaton halts and; (ii) $A_d(S) =d(S)$.
	\end{defn}
	
	To illustrate automaton-implementable decision rules, consider the following example. Suppose $X=\{x,y\}$, each alternative has a ``weight'' of 1. The DM has a ``threshold'' of 2 and uses the following procedure: for every sequence of alternatives, she selects the first alternative whose cumulative weight (due to repetitions) crosses the threshold weight. Then this decision rule can be implemented using an automaton with 5 states, excluding the initial state, $q_0$ (see Figure 4). Our notions of computable and automaton-implementable rules are closely linked to stopping rules and bounded stopping rules and using the Reduction Lemma, we now show that all of them are in fact equivalent.

	\begin{theorem}
		Every computable rule is automaton-implementable.
	\end{theorem}
	\begin{proof}
		Let $d$ be a computable decision rule implementable by a Turing machine $TM_d$. Consider any arbitrary input $S \in X^{\mathbb{N}}$. Since the Turing machine halts for $S$, there exists $k \in \mathbb{N}$ such that $TM_d$ does not examine alternatives in $S$ beyond $S|_k$. Consider any $S' \in X^{\mathbb{N}}$ such that $S'|_k= S|_k$. The Turing machine does not examine alternatives beyond $k$ in $S'$ as well and since $S'|_k= S|_k$, we have $d(S')=d(S)$. Therefore $d$ is a stopping rule. By the Reduction Lemma (Theorem 1), it is a bounded stopping rule. So, there exists $k^* \in \mathbb{N}$ such that for all $S,T \in X^{\mathbb{N}}$, we have $d(S) = d([S|_{k^*}] \cdot T)$. An automaton with at most $\sum_{j=1}^{k^*-1}|X|^{j}$ non-terminal states (one for each segment of length less than $k^*$) and $|X|^{k^*}$ terminal states (one for each segment of length $k^*$) can implement this decision rule. Therefore, it is automaton-implementable.
	\end{proof}
	
	\noindent \textbf{Remark.} The construction of the automaton in the above proof is the most ``inefficient" in terms of the state complexity i.e. the number of states. This is the largest number of states required to implement a bounded stopping rule/simply computable rule. To illustrate this fact, consider the example above. Going by the construction given in the proof, the automaton will have 6 states instead of 5, excluding the initial state (the additional state being $1_y1_x$ which will be different from $1_x1_y$).

	\section{Choice and Revealed Preference}\label{sec:choice}
	One of the central objects of study in abstract choice theory is a choice function. The domain of a choice function is a collection of ``menus" and for every menu it gives the choice set, namely the set of chosen or ``choosable" alternatives \textit{from} that menu.\footnote{Multi-valued functions are often called choice correspondences.}  In the classical theory, these menus correspond to \textit{sets} of alternatives. The analogue of a menu in our model is an infinite sequence and that of a single-valued choice function is what we term as a choice rule. These form a subclass of decision rules that require the DM to choose an alternative from within the sequence. They are formally defined as follows.   
	
	\begin{defn}
		A choice rule $d$ is a map	$d: X^{\mathbb{N}} \to X$ such that for all $S \in X^{\mathbb{N}}$, $d(S)=S(i)$ for some $i \in \mathbb{N}$.
	\end{defn}
	
	If we restrict our attention to choice rules that are also stopping rules, we get the further restriction that for any sequence $S$, the choice must lie within the initial $k_d(S)$-long segment. An import of Theorem 2 is that stopping rules are equivalent to computable decision rules. While studying choice rules, we maintain that the assumption of computability is a \textit{normative} one. This is in line with the interpretation that in our model, the DM encounters alternatives sequentially, one at a time. Therefore the assumption of stopping or equivalently that of computability is a plausible one. In this section, we first provide a characterization of computable choice rules and then provide a ``toolkit" for conducting revealed preference analysis in our setup. 
	
	\subsection{Continuity and computability}

	With the added structure to the decision space in the case of choice rules, we provide a characterization of computable choice rules via their continuity with respect to a natural topological structure on the domain and the co-domain. This is captured in the following result, a proof of which is relegated to the appendix.
	
	\begin{theorem}
		Consider a non-constant choice rule $d$ and assume $X$ and $X^{\mathbb{N}}$ are endowed with the discrete and the product topology respectively. Then $d$ is computable if and only if it is continuous. 
	\end{theorem}
	
	The domain of inputs, $X^{\mathbb{N}}$, is often referred to as a Cantor space and the product topology on it as Cantor topology. In the proof of the result, we first show the structure of basic open sets in the product topology, which are referred to as open cylinders. These correspond to all inputs which ``agree" on one location. Finite intersections of such open cylinders are called cylinder sets and they form the basis for this topology. Therefore the behavioral interpretation is that a DM considers two sequences ``approximately" same by comparing only finitely many initial locations. Continuity of the choice rule then implies that the DM cannot display ``jumps" for close enough choice problems.

	When restricting our attention to choice rules, the reduction lemma allows us to succinctly represent computable choice rules via finite \textit{trees}. To show this, we revisit the example in the previous section again with $X = \{x,y\}$, both the alternatives having a weight of 1 and the threshold weight being 2. The root node of the tree is the ``null" symbol and every path from the root node to a terminal node corresponds to a $k_d(\cdot)$ long segment. Further, in the case of the procedure in this example, the terminal node indicates the choice from the $k_d(\cdot)$ segment. This provides a complete description of the choice rule (see Figure 5). 
	
	\begin{figure}[ht]
		\begin{center}
			\begin{tikzpicture}[
				every node/.style = {minimum width = 2em, draw, circle, scale=0.8, thick},
				scale=1]
				\node {$\varnothing$} [sibling distance= 2cm]
				child {node [xshift= -.5cm]{x}
					child{node [fill=lightgray]{x}}
					child{node [xshift= -0.5cm]{y}   child {node [fill=lightgray][xshift= -0.2cm]{x}} child {node [fill=lightgray][xshift= -0.8cm]{y}}}}
				child {node [xshift= 0.5cm]{y}
					child{node [xshift= 0.2cm] {x}  child {node [fill=lightgray] [xshift= 1.2cm]{x}} child {node [fill=lightgray] [xshift= 0.3cm]{y}}}
					child{node [fill=lightgray] [xshift= 0.2cm] {y}}};
			\end{tikzpicture}	
		\end{center}
		\caption{Tree representation of a computable choice rule}
	\end{figure}
	
	\subsection{A revealed preference toolkit}
	In the theory of choice from sets, the analysis of different choice procedures involves imposing some consistency properties\textemdash called axioms\textemdash on choice functions. These axioms are often in the form of ``contraction" or ``expansion" properties i.e. consistency of choices across menus that are related via set inclusion. In order to conduct an axiomatic analysis of choice procedures in our setup, we require a suitably adapted ``language" to state such axioms on choice rules. To that end, we introduce two useful informational concepts of \textit{sufficiency} and \textit{minimal sufficiency} of segments. In order to define these formally, we require some notation. Recall that a segment is any map $M : [k] \to X$ where $k \in \mathbb{N}$ and the set of all segments is denoted by $\mathcal{S}$. Denote domain of a segment $M$ as dom($M$). We define a strict partial order\footnote{An asymmetric and transitive binary relation.} over the set of all segments $\mathcal{S}$ as follows: for any $M,M' \in \mathcal{S}$, let $M \triangleright M'$ if and only if (i) dom($M'$) $\subsetneq$ dom($M$) and (ii) $M(i) = M'(i)$ for all $ i \in$ dom($M'$). The relation $\triangleright$ is thus the ``extending" relation and $M \triangleright M'$ can be interpreted as the segment $M$ ``extends" the segment $M'$. A sufficient segment is defined as follows.
	
	\begin{defn}
		For a decision rule $d$, a segment $M$ is \textit{sufficient} if 
		$d(M \cdot T)=d(M \cdot T')$ for all $T , T' \in X^{\mathbb{N}}$.
	\end{defn}
	
	The intuitive content of the above definition is as follows. As the DM faces a sequence $S$, there comes a point $k\in\mathbb{N}$ when the segment $M = S|_k$ has enough information for the decision maker to have ``made up its mind" i.e. $M$ is informationally ``sufficient" to enforce a decision. However, the acquired information will \textit{not} be sufficient until a certain point. This motivates the notion of \textit{minimal sufficiency}. 
	
	\begin{defn}
		For a decision rule $d$, a segment $M$ is \textit{minimal sufficient} if it is \textit{sufficient} and for any $M'$ such that $M \triangleright M'$, the segment $M'$ is not \textit{sufficient}.
	\end{defn}
	
	Minimal sufficiency captures the idea of the ``critical" length of a segment to enforce a decision. By critical, we mean that if the  segment is smaller than that length, it can no longer guarantee the same decision for all concatenated sequences. Note that the definition of stopping rules indicates that every sequence must have a corresponding \textit{minimal sufficient} segment that ``implements" the decision. For a given stopping rule, $d$,  let the class of all \textit{sufficient} and \textit{minimal sufficient} segments be denoted by $\mathscr{S}$ and $\mathscr{MS}$ respectively. If $M = S|_k$ for some $k \in \mathbb{N}$ and $M$ is a sufficient segment for a decision rule $d$, then we will abuse notation and denote the decision for $S$ by $d(M)$ i.e. $d(S)=d(M)$.
	
	To illustrate the idea of sufficiency and minimal sufficiency, let us revisit the decision procedure in the example of the previous section. Let $X = \{a,b,c\}$ and all alternatives have weight 1. Suppose the DM has a threshold value of 3 and consider the sequence $S= (a \ b \ c \ a \ b \ c ..)$ i.e. it consists of ``cycles" of alternatives $a$, $b$ and $c$. Here, the minimal sufficient segment is of length 7 i.e. where $a$ is the first alternative to appear 3 times. Any initial segment of $S$ with length less than 7 is not minimal sufficient and any segment with length more than 7 is sufficient.

	Since the set of sequences is infinite, in practice, the identification of sufficient (and minimal sufficient) segments is not possible for a given stopping rule. In order to overcome this problem of identification, we introduce a new object \textemdash a \textit{decision procedure} \textemdash that captures the ``dynamic" aspect of decision-making. Denote by $\star$, a symbol not in the decision space $Y$, representing ``indecision". Decision procedures are maps that are defined on the set of all finite segments. They map every finite segment to either ``indecision" or some decision in $Y$.
	
	\begin{defn}
		A decision procedure is any map $d_* : \mathcal{S} \to Y \cup \{\star\}$ such that 
		\vspace{-0.2cm}
		\begin{enumerate}
			\item[(i)] If $M' \triangleright M$ and $d_*(M') =\star$, then $d_*(M) = \star$
			\vspace{-0.15cm}
			\item[(ii)]  If $M' \triangleright M$ and $d_*(M)  \in Y$, then  $d_*(M) = d_*(M')$
			\vspace{-0.15cm}
			\item[(iii)] For any sequence of segments $M_1, M_2, \ldots$ satisfying $M_{k+1} \triangleright M_k $ for all $k \in \mathbb{N}$, there exists $k \in \mathbb{N}$ such that $d_*(M_k) \in Y$
		\end{enumerate}
	\end{defn}
	
	A decision procedure can be thought of as a dynamic representation of a stopping rule. If a DM is represented by a decision procedure, the three consistency requirements can be interpreted in the following manner. First, if the DM has not made a decision at a given point (in time or space), i.e. at a segment, then she would have not made a decision at any preceding point as well i.e. at any sub-segment of that segment. Second, if the DM has made a decision at a given point, then for any subsequent point as well she makes a decision. Further, she makes the same decision at any subsequent point as well. Third, for any progressively increasing sequence of segments, she makes a decision at \textit{some} point along the sequence.

	The connection between stopping rules and decision procedures is made precise by defining a map that outputs a decision procedure for every stopping rule. Let $\mathcal{D}_s$ and $\mathcal{D}_*$ be the set of all stopping rules and decision procedures respectively. To each stopping rule $d$, associate the corresponding map $d_*^{\dagger} : \mathcal{S} \to Y \cup \{\star\}$ as follows:
	\begin{enumerate}
		\item  For any segment $M \in \mathcal{S}$, let $d_*^{\dagger}(M) \coloneqq \star$ if there exists $S \in X^\mathbb{N}$ and $k \in \mathbb{N}$ such that $k < k_d(S)$ and $M = S|_k$.
		\vspace{-0.2cm}
		\item For any segment $M \in \mathcal{S}$, let $d_*^{\dagger}(M) \coloneqq d(S)$ if there exists $S \in X^{\mathbb{N}}$ such that $k \geq k_d(S)$ and $M= S|_k$.
	\end{enumerate}
	
	\begin{lemma}
		The map $d_*^{\dagger}$ is well defined and $d_*^{\dagger} \in \mathcal{D}_*$.
	\end{lemma}
	\begin{proof}
		To show $d_*^{\dagger}$ is well defined, consider any arbitrary $M \in \mathcal{S}$. Suppose there exists $S \in  X^{\mathbb{N}}$ and $k \in \mathbb{N}$ such that $k < k_d(S)$ and $M = S|_{k}$. Then by definition $d_*^{\dagger}(M) = \star$. Assume for contradiction that there exists $S' \in X^{\mathbb{N}}$, $S' \neq S$ such that $M = S'|_{k}$ and $k \geq k_d(S')$. Now, since $S'|_{k_d(S')} = S|_{k_d(S')}$, by the definition of a stopping rule, we must have $k_d(S') = k_d(S)$, a contradiction. Now, suppose there exist $S,S' \in X^{\mathbb{N}}$ and $k \in \mathbb{N}$ such that $S|_k=S'|_k =M$ and $\max(k_d(S), k_d(S')) \leq k$. Since $S|_k=S'|_k=M$, by the definition of stopping rules we must have $k_d(S) = k_d(S')$ and hence $d(S)=d(S') =d(M)$. Therefore, $d_*^{\dagger}$ is well-defined. \\
		Now, to show $d_*^{\dagger}$ is a decision procedure, consider an arbitrary $M \in \mathcal{S}$. If $d_*^{\dagger}(M) = \star$, then consider any $M'$ such that $M \triangleright M'$. Since there exists $S \in X^{\mathbb{N}}$ such that $k < k_d(S)$ and $M = S|_k$, we know that $M' = S|_{k'}$ for some $k' < k$ and hence we have $d_*^{\dagger}(M') = \star$. Now, suppose  $d_*^{\dagger}(M) = d(S)$ for some $S \in X^{\mathbb{N}}$. We know that $M = S|_{k}$ for some $k$ and $k \geq k_d(S)$. Consider any $M' \in \mathcal{S}$ such that $M' \triangleright M$. For any $S' \in X^{\mathbb{N}}$ and $k' >k$ such that $S'|_{k'} = M$, we know that $S'|_{k_d(S)} = S|_{k_d(S)}$ Therefore we must have $k_d(S') = k_d(S)$ and hence $d(S)=d(S') = d_*^{\dagger}(M')$. Finally consider any sequence of segments $M_1, M_2, \ldots$  such that $M_{k+1} \triangleright M_{k}$ for all $k \in \mathbb{N}$. Now, since every stopping rule is a bounded stopping rule, there exists a $S \in X^{\mathbb{N}}$ and $k \in \mathbb{N}$ such that $k \geq k_d(S)$ such that $M_{k'} = S|_k$ for some $k' \in \mathbb{N}$.  Therefore, we have $d_*^{\dagger}(M_{k'}) = d(S) \in Y$.     
	\end{proof}
	Using the above lemma, we let $ \eta : \mathcal{D}_s \to \mathcal{D}_*$ be defined as: 
	$$ \eta(d) \coloneqq d_*^{\dagger} \ \ \text{for every} \ d \in \mathcal{D}_S $$
	
	This map provides a natural way to assign a unique decision procedure for every stopping rule. Further, for every decision procedure there exists a unique stopping rule. This claim is established by the following result. 
	
	\begin{proposition}
		The map $\eta :  \mathcal{D}_s \to \mathcal{D}_*$ is a bijection. 
	\end{proposition}
	\begin{proof}
		To show that $\eta$ is one-to-one, consider $d,d' \in \mathcal{D}_s $  such that $d \neq d'$. Therefore there exists $S \in X^{\mathbb{N}}$ such that $d(S) \neq d'(S)$. Let $k \coloneqq \max(k_d(S), k_{d'}(S))$ and consider $M = S|_k$. Let $\eta(d) = d_*^{\dagger}$ and $ \eta(d') = d_*^{' \dagger}$. By definition $ d_*^{\dagger}(M)=d(S)\neq d'(S)= d_*^{' \dagger}(M)$. Therefore, $\eta$ is one-to-one. \\
		To show that $\eta$ is onto, consider any arbitrary $d_* \in \mathcal{D_*}$. We need to define $d \in \mathcal{D}_s$ such that $\eta(d) = d_*$. Define $d: X^{\mathbb{N}} \to Y$ as follows. Consider any $S \in X^{\mathbb{N}}$ and the sequence of segments $S|_1, S|_2, \ldots$. Note that $S|_{k+1} \triangleright S|_k$ for all $k \in \mathbb{N}$ and by definition of a decision procedure, there exists $k \in \mathbb{N}$ such that $d_*(S|_k) \in Y$. Let $k^* = \inf \{k \in \mathbb{N}: d_*(S|_k) \in Y\}$. Define $d(S) \coloneqq d_*(S|_{k^*}) \in Y$. By definition of $d_*$, we have $d(S|_{k^*} \cdot T) =d(S)$ for all $T \in X^{\mathbb{N}}$ and $ k_d(S) = k^*$. Therefore $d$ is a stopping rule. Also, by the definition of $\eta$, we have $\eta(d)= d_*$. Therefore, $\eta$ is onto. 
	\end{proof}
	
	We have shown that there is a natural bijection between the class of stopping rules and that of decision procedures. While mathematically equivalent, decision procedures and stopping rules are conceptually different objects. Stopping rules process entire infinite sequences whereas decision procedures show how the DM processes information when the infinite sequences are presented ``gradually" in a dynamic manner. To study physical settings, decision procedures provide a more realistic model of a DM. The minimal sufficiency and sufficiency of segments are naturally defined for decision procedures as follows: For a decision procedure $d_*$, a segment $M$ is	
	\begin{enumerate} 
		\item[(i)] \textit{Minimal sufficient} if $d_*(M) \in Y$ and $d_*(M') = \star$ for all $M'$ such that $M \triangleright M'$. 
		\item[(ii)] \textit{Sufficient} if it is minimal sufficient or there exists $M' \in \mathcal{S}$ such that $M'$ is minimal sufficient and $M \triangleright M'$. 
	\end{enumerate}	
	
	Decision procedures are useful from the revealed preference perspective as they enable complete identification of minimal sufficient and sufficient segments in finitely many steps. For instance, given a tree representation of a decision procedure that corresponds to a choice rule, a Depth First Search (DFS) algorithm will output the class of all minimal sufficient segments \textemdash and consequently sufficient segments \textemdash in finite time.   As we will see in the next section, these fully identifiable segments will help us formulate axioms to behaviorally characterize some natural choice procedures.

	\section{Choice procedures and testability}\label{sec:testability}
	In this section, we operationalize the revealed preference toolkit developed in the previous section. We do that by providing axiomatic characterization of two choice procedures.\footnote{The choice procedures formulated in this section form a subclass of stopping rules. Since stopping rules are equivalent to decision procedures, these can be formulated as decision procedures as well and all the results go through. However, for expositional and notational convenience, we will operate in the domain of stopping rules.} 
	The first one is a formalization of the example introduced in Section \ref{sec:comp} of ``weights'' and a ``threshold''.  Here is a motivating example.
	
	\textbf{Example 1.} Let $X$ be a finite set of movies and $\{X_i\}^N_{i=1}$ denote a partition of the set of movies into $N$ ``genres" for some $N \in \mathbb{N}$. A DM wishes to watch a movie and relies on recommendations. She attaches a ``weight" to each genre which indicates the value she attaches to each genre i.e. there exists a function $w: N \to \mathbb{R}_+$ such that every movie in the $i^{th}$ genre is given the same weight. Her decision procedure involves seeking recommendations \textit{sequentially} from different sources such as peer groups, websites etc.  
	She has a ``threshold" weight in her mind and for every sequence of recommendations, she selects the first movie whose cumulative weight (due to repetitions) crosses the threshold weight.  
	
	The DM is equipped with two objects. The first one is a \textit{weight} function $w : X \to \mathbb{R}_{++}$  that assigns a positive real number to every alternative. The weights can be thought of as some scores the DM assigns to the alternatives that are indicative of the relative importance of alternatives. For instance, while seeking movie recommendations, a DM may give a higher score to ``action" movies over the ones belonging to the genre ``drama". The second object is a \textit{threshold} weight $v \in \mathbb{R}_+$. This  threshold corresponds to the satisficing component that the DM uses to make decisions. 
	
	The DM uses the following procedure to make choices. For any sequence, she ``parses" through it sequentially, maintaining a count of the cumulative weight of each alternative in a ``register". As soon as she encounters an alternative whose cumulative weight crosses the threshold, she stops and selects it. We call this procedure \textbf{Threshold based stopping} (TBS). In order to formally define this procedure, denote for any given sequence $S$ and a position $N \in \mathbb{N}$, the cumulative weight of an alternative $x$ as
	$$W^N_S(x) \coloneqq |\{i \in [N]: S(i) = x\}| \cdot w(x) $$
	
	Now, we can define TBS formally as follows.
	
	\begin{defn}
		A computable choice rule $d$ is a Threshold based stopping (TBS) rule if there exists $v \in \mathbb{R}_+$ and $w : X \to \mathbb{R}_{++}$ such that for any $S \in X^{\mathbb{N}} $,
		$$ d(S) = x^*(S)$$
		where $x^*(S) \in X$ is the unique alternative satisfying the following condition: $ W^N_S(x) \geq v > W^N_S(y) $
		for all $y \neq x$ and some $N \in \mathbb{N}$.
	\end{defn}
	
	Note that since the weights assigned to alternatives are positive real numbers, for any sequence, there exists some position in it such that exactly one alternative's cumulative weight crosses (weakly) the threshold at that position. This defines the stopping condition of the DM. This procedure is behaviorally characterized by two axioms. In order to state the first axiom, we introduce the concept of a \textit{favorable transformation} of a sequence with respect to an alternative. Intuitively, this involves bringing an alternative ``closer" to the DM by transforming that sequence into a new one. That is, by lowering its position, an alternative is examined earlier than it was examined previously. There are two ways to favorably transform a sequence with respect to an alternative. The first way is to interchange the position of that alternative with another alternative that precedes it in the input. Formally, for any sequence $S $ and $k\in\mathbb{N}$, let $\hat{S}^k\in X^{\mathbb{N}}$ be the sequence which is defined as follows:
	
	\[
	\hat{S}^k(i)\coloneqq
	\begin{cases}
		S(k+1)	&\text{ if }i=k;\\
		S(k)		&\text{ if }i=k+1;\\
		S(i)		&\text{ otherwise.}
	\end{cases}
	\]
	That is, the sequence $\hat{S}^k$ is obtained from $S$ by interchanging its $k^{th}$ and $(k+1)^{st}$ elements. We call $\hat{S}^k$ a \textit{favorable shift} of $S$ with respect to an alternative $x$ if $S(k+1)=x$. Denote the class of all favorable shifts of $S$ with respect to an alternative $x$ by $\mathcal{FS}(S,x)$. The second way to bring an alternative closer to the DM is by deleting another alternative. Formally, for any sequence $S$ and  $k \in \mathbb{N}$ let $\tilde{S}^k \in X^{\mathbb{N}}$ be the sequence defined as 
	\[
	\tilde{S}^k(i)\coloneqq
	\begin{cases}
		S(i)	&\text{ if }i<k;\\
		S(i+1)	&\text{ if }i \geq k
	\end{cases}
	\]		
	The sequence $\tilde{S}^k$ is obtained from $S$ by dropping the alternative located at the $k^{th}$ position. We call $\tilde{S}^k$ as a \textit{favorable deletion}  of $S$ with respect to an alternative $x$ if $S(k) \neq x$. Denote the class of all favorable deletions of $S$ with respect to an alternative $x$ be denoted by $\mathcal{FD}(S,x)$. 
	
	For any $S\in\mathcal{S}$ and $x\in X$, a \textit{favorable transformation} of $S$ with respect to $x$ is a  favorable shift \textit{or} a favorable deletion. The class of all favorable transformations of $S$ with respect to $x$ shall be denoted by $\mathcal{F}(S,x)$. Therefore, $\mathcal{F}(S,x)=\mathcal{FS}(S,x)\cup\mathcal{FD}(S,x)$ by definition. Our first condition requires that the stopping rule should be ``monotone" when it comes to favorable transformations with respect to the chosen alternatives. That is, it requires the DM to make the same choice if the chosen alternative is brought ``closer" to him via a favorable transformation.
	
	\textsc{Monotonicity}: A decision rule satisfies monotonicity if for any $S,S' \in X^{\mathbb{N}}$ such that $S' \in \mathcal{F}(S,x)$, $$[d(S)=x] \implies [d(S')=x] $$ 
	
	The second condition is about the effect on choice when a sufficient segment is concatenated to any truncation of a minimal sufficient segment. It states that if a minimal sufficient segment $M$ ``implements" an alternative $x$ and another sufficient segment $N$ that does not contain $x$ implements some other alternative, then concatenating any truncation of $M$ with $N$ prevents $x$ from being chosen.	In other words, it asserts that a sufficient segment not containing an alternative can ``dominate" a non-minimal sufficient segment in an informational sense. We say that for an alternative $x$ and a segment $M$, $x \notin M$ if $ M(i) \neq x$ for all $ i \in$ dom($M$). That is, $x \notin M$ when $x$ does not appear in any position of the segment $M$.
	
	\textsc{Informational Dominance}:  A decision rule $d$ satisfies informational dominance if for any $M \in \mathscr{MS} $ and $N \in \mathscr{S}$ such that $d(M)=x$ and $d(N) \neq x$ and for any $M'$ such that $M \triangleright M'$, 
	$$[x \notin N] \implies [x \neq d([M'\cdot N])]$$
	
	Note that since the segment $[M' \cdot N]$ contains a sufficient segment within it, it must be a sufficient segment itself. To illustrate this condition, consider a DM that assigns weight 1 to each alternative in $X = \{a,b,c\}$ and has a threshold weight of 3. The segment $M=(a \ b \ c \ a \ b \ c \ a )$ is a minimal sufficient segment with $d(M) =a$. Consider another segment $ N= (b \ b \ c \ b \ b )$. Note that this is a sufficient segment since $b$ appears 3 times in it. Now consider an arbitrary truncation $M'$ of $M$. Informational dominance requires that for any concatenation of $M'$ with the segment $N$, the choice cannot be equal to $a$. In this case, for any truncation $M'$, we can see that $d(M' \cdot N) =b$ since $b$ is the first alternative whose cumulative weight reaches 3.  We now show that these two conditions characterize cardinal satisficing behavior.
	
	\begin{theorem}
		A computable choice rule is a TBS if and only if it satisfies Monotonicity and Informational Dominance.
	\end{theorem}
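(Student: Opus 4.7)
For necessity, I would directly verify the two axioms from the definition of a CSR with parameters $(w,v)$. Monotonicity follows from observing that a favorable transformation of $S$ for $x$---either swapping an $x$ to an earlier adjacent position or deleting a non-$x$ element---weakly advances the cumulative weighted count $W^j_S(x)$ at every position $j$ while weakly delaying $W^j_S(y)$ for every $y\neq x$; so if $x$ is the first alternative to cross $v$ in $S$, it remains first in the transformed sequence. For Informational Dominance, the minimal sufficiency of $M$ with $d(M)=x$ forces $x$'s cumulative weight in $M\vert_k$ to be strictly below $v$ for every $k<m$; appending a sufficient $N$ that does not contain $x$ leaves $x$'s weight strictly below $v$ throughout, while some alternative's weight reaches $v$ within $N$, so the CSR choice of $M\vert_k\cdot N\cdot T$ cannot be $x$.

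For sufficiency, I would first construct the parameters from constant sequences. For each $x\in X$, let $n_x$ be the length of the minimal sufficient segment of $S_x:=(x,x,\ldots)$---this exists and is finite because $d$ is a stopping choice rule with $d(S_x)=x$---and set $w(x):=1/n_x$ and $v:=1$. A key intermediate step is to show that the segment $(x)^{n_x}$ is sufficient for $d$ with value $x$: for every $z\neq x$, Informational Dominance applied to $M=(z)^{n_z}$ and $N=(x)^{n_x}$ at $k=0$ gives $d((x)^{n_x}\cdot T)\neq z$ for all $T$, and the choice-rule property then forces $d((x)^{n_x}\cdot T)=x$. Now fix any $S$, let $x^*$ be the CSR choice at stopping position $N$ under these parameters, write $k_y$ for the count of $y$ in $S\vert_N$ (so $k_{x^*}=n_{x^*}$ and $k_y<n_y$ for every $y\neq x^*$), and let $y_1,\ldots,y_m$ enumerate the non-$x^*$ alternatives appearing in $S\vert_N$. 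Define the canonical rearrangement
\[
S':=(y_1)^{k_{y_1}}\cdot(y_2)^{k_{y_2}}\cdots(y_m)^{k_{y_m}}\cdot(x^*)^{n_{x^*}}\cdot T,
\]
where $T$ is the tail of $S$ after position $N$. A simple counting argument shows the $i$th occurrence of $x^*$ in $S\vert_N$ sits at a position at most $\sum_j k_{y_j}+i$, so $S$ can be obtained from $S'$ by iterated favorable shifts for $x^*$; by Monotonicity it therefore suffices to prove $d(S')=x^*$.

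To establish $d(S')=x^*$, I would rule out each alternative $z\neq x^*$. If $z=y_i$, suppose $d(S')=y_i$ and successively apply Monotonicity via favorable shifts of the $y_i$-block to the very front of $S'$ and favorable deletions of every $y_j$ (for $j\neq i$) lying in $S'\vert_N$; this yields a sequence of the form $(y_i)^{k_{y_i}}\cdot(x^*)^{n_{x^*}}\cdot T''$ on which $d$ still equals $y_i$, directly contradicting Informational Dominance applied to $M=(y_i)^{n_{y_i}}$ and $N=(x^*)^{n_{x^*}}$ at $k=k_{y_i}<n_{y_i}$. If instead $z\notin\{x^*,y_1,\ldots,y_m\}$ and $d(S')=z$, favorable deletions for $z$ remove every $y_i$-block from $S'\vert_N$ (all of which consist of non-$z$ symbols), producing $d((x^*)^{n_{x^*}}\cdot T)=z$, which contradicts the sufficiency of $(x^*)^{n_{x^*}}$ for $x^*$ established in the intermediate step. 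Since $d$ is a choice rule and every non-$x^*$ value is excluded, $d(S')=x^*$, and therefore $d(S)=x^*$.

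The main obstacle is that Monotonicity propagates the value of $d$ only forward along favorable transformations (not backward), and Informational Dominance yields only negative conclusions and only for sequences with a very specific prefix structure. The plan finesses this by choosing the canonical $S'$ so that $S$ is reachable from $S'$ by favorable shifts for the intended winner $x^*$ (allowing Monotonicity to transport $d(S')=x^*$ to $d(S)=x^*$), while at the same time, for every alternative hypothesis $d(S')=z$, favorable transformations for $z$ itself reduce $S'$ to a sequence falling within the hypotheses of Informational Dominance. The delicate step is verifying that each required sequence of shifts and deletions is a valid composition of favorable transformations and that the reduced sequences fit exactly within the hypotheses needed to close the contradiction.
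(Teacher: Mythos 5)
Your strategy is sound and runs close to the paper's own proof: the necessity checks are essentially the same, and for sufficiency you extract the same parameters $n_x$ (the minimal sufficient length of the constant sequence $(x,x,\ldots)$), $w(x)=1/n_x$, $v=1$, and then use Monotonicity to move between an arbitrary $S$ and a canonical block arrangement on which Informational Dominance, applied to the constant blocks $(y)^{n_y}$ and $(x^*)^{n_{x^*}}$, pins down the choice. The paper runs the reduction in the opposite direction \textemdash\ starting from $d(S)=x$ and pushing the $x$'s to the front to contradict $i(S,y)<i(S,x)$ \textemdash\ but the substance is the same.

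However, one step fails as written. You claim $S$ is obtainable from $S'=(y_1)^{k_{y_1}}\cdots(y_m)^{k_{y_m}}\cdot(x^*)^{n_{x^*}}\cdot T$ by iterated favorable shifts for $x^*$. A favorable shift for $x^*$ only transposes an $x^*$ with its immediate predecessor, so any composition of such shifts preserves the relative order of the non-$x^*$ entries. Your counting argument controls the positions of the $x^*$'s, but grouping the other entries into blocks by type generally destroys their original interleaving: if, say, $S\vert_N=(w,y,w,x^*,x^*)$, neither $(y)(w)^2$ nor $(w)^2(y)$ can be shifted into the pattern $w,y,w$, so $S\notin\mathcal{F}(S',x^*)$ (even iterated) and Monotonicity cannot transport $d(S')=x^*$ to $d(S)=x^*$. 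The repair is easy and leaves the rest intact: let $S'\vert_N$ consist of the non-$x^*$ entries of $S\vert_N$ \emph{in their original order} followed by $(x^*)^{n_{x^*}}$; your elimination of each $z\neq x^*$ via favorable deletions and shifts for $z$ plus Informational Dominance nowhere needs the $y_j$'s to be contiguous. Two smaller points: the ``intermediate step'' that $(x)^{n_x}$ is sufficient with value $x$ is immediate from the definition of $n_x$ as the minimal sufficient length of the constant sequence, so the appeal to Informational Dominance at $k=0$ (which sits outside the axiom's range $k<m$, $k\in\mathbb{N}$) is unnecessary; and for favorable deletions it is not literally true that every $y\neq x$ is ``weakly delayed'' (alternatives occurring after the deleted position are pulled forward), though your conclusion survives because each competitor's crossing position drops by at most one while $x$'s drops by exactly one.
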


	Now we turn to the second choice procedure which is satisficing heuristic. Satisficing, first introduced by Herbert Simon (see \cite{simon1955behavioral}) is an influential model of decision-making and has been studied widely in the literature (see \cite{kovach2020satisficing}, \cite{aguiar2016satisficing},  \cite{tyson2015satisficing} and \cite{papi2012satisficing}, among others).
	The basic idea underlying satisficing is that due to factors like computational constraints, complexity of the choice problem etc., a DM may not resort to optimizing behavior. Instead, based on a binary classification of the alternatives into acceptable/satisfactory and non-acceptable/unsatisfactory, she may select any alternative belonging to the former category. Satisficing behavior is often modeled as a DM examining alternatives sequentially until a ``good enough'' alternative is observed. While some existing models endogenize the search order of the DM (see \cite{aguiar2016satisficing}), others treat it as observable in the form of a list and vary the threshold (see \cite{kovach2020satisficing}).	Our setup provides a natural way to study satisficing behavior. Consider a DM who is represented by three objects: (i) A ranking over the set of alternatives $X$, denoted by $\succ$ which is a preference order,  (ii) A threshold alternative $a^* \in X$ which is used for the binary classification of the set of alternatives into satisfactory and unsatisfactory; and (iii) an attention parameter $k \in \mathbb{N}$ that specifies the relevant segment for any sequence.  
	
	The DM uses the following procedure to make a choice. For any sequence, she ``parses" through it sequentially. She stops if she encounters $a^*$ or an alternative that is ranked above $a^*$. Otherwise she stops after observing the first $k$ alternatives and chooses the $\succ$-maximal one from the set of observed alternatives\footnote{For any set $A$, the $\succ$-maximal set, denoted by $\max(A, \succ)$ is defined as $\max(A, \succ) \coloneqq \{x \in A| \neg y \succ x \ \forall \ y \in A \setminus \{x\}\}$.}.   This is in contrast with the satisficing model discussed in \cite{rubinstein2012lecture} where a DM chooses the last alternative from the list if it contains no alternative ranked above the threshold. To illustrate this procedure, consider an example where $X = \{a,b,c,d\}$ and the DM's preferences are $a \succ b \succ c \succ d$, the attention parameter $k$ is 2 and the threshold alternative is $b$. Consider a sequence $S = (c \ d \ a \ a \dots)$. Since the first two positions do not contain any satisfactory alternative, the choice is $c$ whereas the choice from the sequence $S' = (a \ b \ c \ c \ldots )$ is $a$. 
	
	In order to formally define our procedure, we let for any alternative $x \in X$ its upper and lower contour set with respect to $ \succ$ be denoted by $U(x)$ and $L(x)$, respectively. That is, $U(x) \coloneqq \{y \in X: y \succ x\}$\footnote{Note that since $\succ$ is assumed to be reflexive, $x \in U(x)$ for all $x \in X$.} and $L(x) \coloneqq X \setminus U(x)$. Now, we formally define this procedure.
	
	\begin{defn}
		A computable choice rule $d$ is an Ordinal Satisficing Rule (OSR) if there exists $(\succ , a^*, k)$  such that for any $S \in X^{\mathbb{N}}$, 
		\[ d(S) =
		\begin{cases}
			S(i) & \text{where} \ i \in [k], S(i) \in U(a^*) , S(j) \in L(a^*) \ \forall j < i ; \\
			\max( \bigcup_{i \in [k]} S(i), \succ) & \text{if} \ S(i) \in L(a^*) \ \forall \ i \in [k] 	
		\end{cases}
		\]
		
	\end{defn}

	Consider two extreme cases of this procedure. The first is when everything is satisfactory. That is, $x \succ a^*$ for all $x \in X$. In this case, the DM always chooses the first alternative that is presented to her. On the other hand, if $a^*$ is the $\succ$-maximal alternative, then this procedure corresponds to ``attention-constrained" \textit{rational} behavior. That is, within the limited attention span of the DM, she always chooses the best alternative. Therefore rational behavior is only a special case which is in contrast with satisficing over sets where it is indistinguishable from preference maximization irrespective of the threshold (See \cite{rubinstein2012lecture}).
	
	This procedure is behaviorally characterized by three axioms. Before we state these axioms, we need to define the concept of a \textit{decisive} alternative. The idea behind a decisive alternative is that whenever it is present in a minimal sufficient segment, it is chosen. Intuitively, it dominates attention of the DM and enforces its choice. For an arbitrary choice rule, the set of decisive alternatives may be empty. However, we will show that in the case of OSR, it turns out to be non-empty. Further, in the special case where the DM is an \textit{attention-constrained} preference maximizer i.e. she chooses the $\succ$-maximal alternative after viewing a fixed length of alternatives,  this set will be a singleton.
	
	\begin{defn}
		For a decision rule $d$, an alternative $x$ is decisive if for all $M \in \mathscr{MS}$,
		$$[x \in M] \implies [x = d(M)]$$	
	\end{defn}
	We say that an alternative is \textit{non-decisive} if it is not decisive. Denote by $D$ and $D'= X \setminus D$, the set of all decisive alternatives and non-decisive alternatives, respectively. Now,  based on the notion of a decisive alternative, let $\mathscr{MS}_{D'}$ and $\mathscr{MS}_{D}$ denote the classes of all minimal sufficient segments that do not contain a decisive alternative and that do contain a decisive alternative, respectively. That is, $\mathscr{MS}_{D'}  \coloneqq  \{M \in \mathscr{MS} :  M \subseteq D'\}$ and $\mathscr{MS}_{D}  \coloneqq  \mathscr{MS} \setminus \mathscr{MS}_{D'}$. 
	
	\vspace{0.2cm}	 
	
	Our first condition is an adaptation of condition $\alpha$ (also called Chernoff's condition) for single valued choice functions to our setup. In the case of menus as sets, condition $\alpha$ requires that if an alternative that is chosen from a menu, say $A$ and it is present in a smaller menu, say $B$ where $B \subset A$, then it must be chosen from $B$ also. In our setup, we say that if an alternative is chosen in minimal sufficient segment and it is present in another minimal sufficient segment whose range is contained within the range of the former segment, then it must be chosen in the new segment as well.  Notice that this is true for a decisive alternative by definition, therefore we impose it for only minimal sufficient segments belonging to $\mathscr{MS}_{D'}$. For any two segments $M$ and $M'$, if $x \in M$ implies $x \in M'$, we abuse notation and write $M \subseteq M'$.
	
	\textsc{Sequential-$\alpha$}: 
	A choice rule $d$ satisfies sequential-$\alpha$ if for any $M, M' \in \mathscr{MS}_{D'}$ such that $M \subseteq M'$,
	$$[d(M') \in M] \implies [d(M')=d(M)]$$ 
	
	Our next axiom is an adaptation of the No Binary Cycles (NBC) condition on choice functions over sets. NBC requires that binary choices cannot display cycles implying that in the case of single valued choice functions, the pairwise revealed relation must be transitive. The analogue of a binary menu in our setup is a minimal segment that has only two alternatives in it. For any two distinct alternatives $x,y$, denote by $M_{xy}$ any segment that has only $x$ and $y$ in it. Therefore, our NBC condition prevents choices from these segments to display cycles.

	\textsc{Sequential-NBC}:  A choice rule $d$ satisfies sequential-NBC if for any $x,y,z \in X$ and $M_{xy},M_{yz}, M_{xz} \in \mathscr{MS}_{D'}$,  
	$$[d(M_{xy})=x,  d(M_{yz})=y] \implies [d(M_{xz})\neq z]$$ 
	
	Our final axiom is about the effect of replacing alternatives on the informational content of a minimal sufficient segment. It requires that if the occurrence of a non-decisive alternative in a minimal sufficient segment is replaced by some other non-decisive alternative, its informational content should remain the same. In other words, the new segment should still be a sufficient segment. Again, this applies to only segments that do not have decisive alternatives since it holds for decisive alternatives by definition. 
	
	\textsc{Replacement}:  A choice rule $d$ satisfies replacement if for any $x,y \in D'$ and $M,M'  \in \mathcal{S}_m$ such that $M(i)= x$, $M'(i)=y$ for some $i \in [m]$ and $M(j)=M'(j)$ for all $j \neq i$, $j \in [m]$
	$$[M \in \mathscr{MS}_{D'}] \implies [M' \in \mathscr{S}]$$
	
	To illustrate, consider the following choice procedure that \textit{violates} this condition: Let $X = \{x,y,z\}$ and $d$ be defined as follows
	
	$$ d(S) = \begin{cases}
		S(3)\ \ \text{if} \ S(3) \in \{x,y\} \\
		S(5)\ \ \text{otherwise}    
	\end{cases} $$ 
	
	For any sequence $S$, the DM looks at the third location. If it is either $x$ or $y$, then she chooses it, otherwise she picks the alternative at the fifth location. The minimal sufficient segments are of size 3 or 5. Further, no alternative is decisive. Now consider a segment $M = (x \ y \ x)$. It is clear that this segment is a minimal segment since for all $S \in X^{\mathbb{N}}$, we have $d([M \cdot S]) =x$. Now, create $M'$ by replacing $x$ in the third location with $z$. That is, $M' = (x \ y \ z)$. The segment $M'$ is no longer sufficient since for any sequence $S = (x \ y \ z \ x \ y \ \ldots)$, we have $d(S) =y$, whereas for any $S' = (x \ y \ z \ x \ z \ \ldots)$, we have $d(S')=z$.   Now, we are ready to state our result.
	
	\begin{theorem}
		A computable choice rule is an OSR if and only if it satisfies Sequential-$\alpha$, Sequential-NBC and Replacement. 
	\end{theorem}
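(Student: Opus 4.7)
My plan is to prove the two directions separately. For necessity, suppose $d$ is the OSR with parameters $(\succ, a^*, k)$. I would first identify the decisive set: $D = \{x \in X : x \succ a^*\}$ when $a^*$ is not $\succ$-maximal in $X$, and $D = \{a^*\}$ otherwise. This forces $\mathbb{MS}_{D'}$ to consist exactly of the length-$k$ segments $M$ with $X(M) \subseteq D'$, and on each such $M$ the rule selects $\max_{\succ}(X(M))$. Replacement then follows because substituting one $D'$-element for another preserves length and the ``no alternative above $a^*$'' property, hence sufficiency. Sequential-$\alpha$ and S-NBC reduce to the standard WARP and acyclicity conditions for preference-maximization on $\mathbb{MS}_{D'}$.

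For sufficiency, assume $d$ satisfies the three axioms, set $k := \sup_{S} k_d(S)$ (finite by Theorem~1), and define $D$ from $d$ per the paper's definition. The structural core is the claim that $\mathbb{MS}_{D'}$ coincides exactly with the family of length-$k$ segments $M$ satisfying $X(M) \subseteq D'$. I would establish this by iterating Replacement from any minimum-length element of $\mathbb{MS}_{D'}$ to argue that all elements of $\mathbb{MS}_{D'}$ share a common length $k'$, and then invoking any sequence $S^*$ attaining $k_d(S^*) = k$ to force $k' = k$, via a case split on whether $X(S^*|_k) \subseteq D'$ or contains a $D$-element at position $k$. Consequently, for every nonempty $A \subseteq D'$ with $|A| \leq k$ there exists $M_A \in \mathbb{MS}_{D'}$ with $X(M_A) = A$, and Sequential-$\alpha$ forces $d(M_A)$ to depend only on $A$; write $c(A)$ for this common value.

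Next, define $x \succ_{D'} y$ iff $c(\{x,y\}) = x$. S-NBC rules out 3-cycles, and together with completeness (each binary set has a well-defined $c$-value) and Sequential-$\alpha$'s WARP content, this yields a linear order on $D'$ with $c(A) = \max_{\succ_{D'}}(A)$ \textemdash{} the standard revealed-preference derivation. Set $a^* := \max_{\succ_{D'}}(D')$ and extend $\succ_{D'}$ to a linear order $\succ$ on $X$ by ranking every element of $D$ above $a^*$ in any order. To verify $d(S) = d_{\mathrm{OSR}(\succ, a^*, k)}(S)$ for every $S$, I split into two cases. If $S|_k$ contains no $D$-element, the structural claim gives $S|_k \in \mathbb{MS}_{D'}$, hence $k_d(S) = k$ and $d(S) = c(X(S|_k)) = \max_{\succ_{D'}}(X(S|_k))$, matching OSR. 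If some $D$-element appears in $S|_k$ at first position $j$ with $y := S(j)$, I argue $S|_j \in \mathbb{MS}$ with $d(S|_j) = y$, yielding $d(S) = y$.

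The main obstacle is the argument that $S|_j \in \mathbb{MS}$ with $d(S|_j) = y$. Minimality follows from the structural claim: any sufficient prefix of $S|_j$ of length less than $j$ would, via its minimal sub-prefix, produce an element of $\mathbb{MS}_{D'}$ of length strictly less than $k$, contradicting the common-length conclusion. Sufficiency is then established by showing that for every tail $T$, the minimal sufficient prefix of $(S|_j) \cdot T$ has length at least $j$ (by the minimality just proved), hence contains $y \in D$; decisiveness of $y$ pins down $d((S|_j) \cdot T) = y$ independently of $T$. A secondary issue is the degenerate case $D = X$, which forces $k = 1$ and reduces $d$ to the first-element rule \textemdash{} trivially an OSR with any $\succ$ and $a^*$.
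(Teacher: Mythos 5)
Your proposal is correct and follows essentially the same route as the paper's proof: Replacement-chains to establish a common length for $\mathbb{MS}_{D'}$, the observation that a minimal sufficient segment containing a decisive alternative terminates at its first decisive element, a revealed preference on $D'$ built from Sequential-$\alpha$ (asymmetry/WARP) and S-NBC (no binary cycles), extension by placing $D$ above $a^*=\max_\succ D'$, and the same final two-case verification. The only differences are presentational \textemdash{} you are somewhat more explicit than the paper in pinning the common length of $\mathbb{MS}_{D'}$ to the uniform stopping bound $k$ and in justifying why $S|_j$ is minimal sufficient in case (ii), and you supply the necessity direction the paper omits \textemdash{} but the decomposition and key lemmas coincide.
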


	\section{Related Literature and Concluding Remarks}\label{sec: conclusion}
	
	The idea that a DM may observe alternatives in the form of a list, i.e. an ordered set, was first formalized in choice theory by \cite{rubinstein2006model}. Based on their framework, a variety of models incorporating order and framing effects in choice have been introduced (see for instance \cite{horan2010sequential}, \cite{guney2014theory} and \cite{dimitrov2016divide}); more broadly, \cite{salantrubinstein2008frames} develop a general theory of choice in the presence of \textit{frames}\textemdash observable but payoff-irrelevant information, of which the order in which alternatives are presented is a leading example. A related strand rationalizes order- and menu-dependent behavior through the sequential application of heuristics or through limited consideration, as in the sequentially rationalizable choice of \cite{manzini2007sequentially} and the revealed-attention model of \cite{masatlioglu2012revealed}.

	Satisficing, first introduced by \cite{simon1955behavioral}, has been an influential idea in the choice-theoretic literature and there have been several adaptations of it. The list setup provides a natural framework to study satisficing behavior. \cite{kovach2020satisficing} introduce one such model in which the DM searches through the list until she has seen \textit{k} alternatives and then chooses from those she has seen, while \cite{tyson2008cognitive} relates satisficing to failures of contraction consistency under cognitive constraints. Another adaptation, interpreted as one of \textit{approval} rather than choice, is given by \cite{manzini2019sequential}. The behavioral content of satisficing sharpens when the search order is itself observable: \cite{caplin2011searchsatisficing} provide experimental ``choice-process" evidence that subjects search sequentially and stop once a reservation level of utility is reached. The normative benchmark for such sequential search is the reservation-rule characterization of \cite{weitzman1979optimal}, itself rooted in the classical economics of information of \cite{stigler1961economics} and the job-search model of \cite{mccall1970economics}. Since our framework is a generalization of lists, satisficing heuristics are a natural object of study, and we introduced a satisficing procedure in Section \ref{sec:testability}.
	
	Computational aspects of decision-making have long been of interest to economic theorists. As pointed out by \cite{richter1999computable}, computability-based theories also provide foundations for complexity analysis, and the idea of bounded rationality has been closely linked to the computational limitations of an economic agent (see \cite{futia1977complexity}). This connection is most developed in the analysis of repeated games, where players' strategies are implemented by finite automata (\cite{rubinstein1986finite}) whose use is taken to be costly in the number of states (\cite{abreu1988structure}). Relatedly, \cite{gilboa1988complexity} studies the complexity of computing best responses among such automata, and \cite{kalai1988finite} link strategic complexity to the size of the automata used in repeated play. In the context of finitely repeated games, bounding the complexity of strategies is shown to justify cooperation (see \cite{neyman1985bounded}).

	The discussion of the role of computational constraints in individual decision-making goes back to \cite{simon1955behavioral}. He remarks ``\ldots limits on computational capacity may be important constraints entering into the definition of rational choice under particular circumstances". To capture the finite informational processing capacity, \cite{kramer1967impossibility} models the DM as a finite automaton and shows the incongruence of rational decision-making given this behavioral restriction on the DM. On the other hand, \cite{salant2003limited} shows that with the finite automaton model of decision-making, implementation of rational choice functions over menus (sets of alternatives) is computationally efficient, while implementation of other choice functions is much more computationally demanding. A conceptual difficulty underlying this program\textemdash that modeling a boundedly rational agent as \textit{optimally} choosing her decision procedure invites an infinite regress\textemdash is examined by \cite{lipman1991decide}.
	
	While modeling DMs using an automaton has been a popular approach to model aspects of bounded rationality, to model the computational aspects of decision-making, the Turing machine is a more appropriate device. It is a more powerful model of computation than a finite automaton and can be thought of as a precise way of mathematically describing an algorithm. Turing machines embody the idea of computability in the truest sense. \cite{richter1999computable} use the idea of a Turing machine to define computable preferences and show that computable preferences have computable utility representations.  \cite{camara2021computationally} models the DM as a Turing machine in the environment of decision-making under risk. He introduces the notion of computational tractability. A decision problem is \textit{intractable} if it cannot be implemented by an algorithm in a ``reasonable" amount of time. He shows that expected utility maximization is intractable unless the utility function satisfies a strong separability property.
	
	Our object of interest, infinite sequences, in the context of choices has been previously studied by  \cite{caplin2011search}. Our model differs from their model in terms of incorporating sequences in the domain of choice functions whereas they enrich the observable choice data by incorporating sequences as the output of the choice function and interpret these sequences as provisional choices of the DM with contemplation time.

	In this paper, we introduced a new model of decision making that considers infinite sequences as primitives as against sets or finite lists. This model provides a natural setting to study decision-making situations where the DM faces alternatives sequentially and also provides a generalization of the framework on choice over lists introduced in \cite{rubinstein2006model}. Further, our model allows a study of situations where the decision to stop examining alternatives is completely endogenous to the DM. To that end, we introduced a natural subclass of decision rules called stopping rules that require the DM to decide after viewing a finite segment of every sequence. Our main result \textemdash the Reduction Lemma \textemdash showed that the class of stopping rules is equivalent to its seemingly stricter subclass\textemdash that of bounded-stopping rules.  We introduced the notion of computability of a decision rule using Turing machines and showed that any computable decision rule can be implemented by a finite automaton \textemdash a result that does not hold in the setup of decisions over finite lists.

	The Reduction Lemma allows us to develop a language to formulate \textit{testable} conditions for studying different choice procedures. This involves defining the informational concepts of sufficiency and minimal sufficiency of finite segments in decision-making. With a dynamic representation of stopping rules in the form of decision procedures, we showed that these segments are completely identifiable in practice. To demonstrate the applicability of our model, we introduced some natural choice procedures and provided their behavioral characterizations. Future work will involve studying stochastic choice rules and examining potential applications of our main result.
	
	\pagebreak
	
	\section*{Appendix: Omitted Proofs}

	\begin{center}
		\textsc{Proof of Theorem 3}
	\end{center}
	
	First, we describe the structure of product topology on $X^{\mathbb{N}}$ where $X$ is endowed with the discrete topology. We know that $\Pi_{X^{\mathbb{N}}}$, the product topology, is the smallest topology with respect to which the projection maps are continuous. Consider any map $M: \{1,\dots,N\} \to X$ where $N \in \mathbb{N}$ and define the set $B(M)$ as: 
	
	$$ B(M) = \{S \in X^{\mathbb{N}} :  \ \text{for all} \ i \in \{1,\dots,N\} , \ S(i)= M(i) \}$$
	Let $\mathcal{B}_{X^{\mathbb{N}}}$ be the class of all such sets. Note that a for any $N \in \mathbb{N}$, the number of possible maps $M : \{1, \ldots N \} \to X$ is $|X|^N$. These sets are what can be interpreted as ``open balls" in $X^{\mathbb{N}}$.
	Let $\mathscr{T}_{X^{\mathbb{N}}}$ be the class of unions of arbitrary subcollections of $\mathcal{B}_{X^{\mathbb{N}}}$.   
	\begin{lemma}
		$\mathscr{T}_{X^{\mathbb{N}}}$ is the product topology on $X^{\mathbb{N}}$
	\end{lemma}
	\begin{proof}
		First, we show that $\mathscr{T}_{X^{\mathbb{N}}}$ is indeed a topology over $X^{\mathbb{N}}$. Notice that $\mathscr{T}_{X^{\mathbb{N}}}$ is closed under arbitrary unions by definition. To show that it is closed under finite intersections, let $ \bigcap_{i=1}^{K}B_i$ be a finite intersection such that $B_i \in \mathscr{T}_{X^{\mathbb{N}}}$ for all $i \in \{1, \ldots K\}$. Note that each $B_i$ is a union of some subcollection of $\mathcal{B}_{X^{\mathbb{N}}}$ and therefore we can write $B_i = \bigcup_{j_i \in \mathcal{J}_i}B_i^{j_i}$, with $\mathcal{J}_i$ being some indexed set, where each $B_i^{j_i}$ corresponds to an ``open ball" i.e. is a set of the form $B(M)$ for some $M : \{1, \ldots , N\} \to X$ and $N \in \mathbb{N}$.  Using the definition of $B(M)$, we know that there exist sets $A_1^i, A^i_2 \ldots $ with $A^i_j \subseteq X$ for all $j \in \mathbb{N}$ such that 
		$$B_i= \{S \in X^{\mathbb{N}}: \ \text{for all} \ j \in \mathbb{N}, S(j) \in A^i_j \}$$
		
		So, we can write $ \bigcap_{i=1}^{K}B_i$ as
		
		$$  \bigcap_{i=1}^{K}B_i = \{S \in X^{\mathbb{N}}: \ \text{for all} \ j \in \mathbb{N}, S(j) \in \bigcap_{i=1}^{K}A^i_j \}  $$

		
		Clearly, $  \bigcap_{i=1}^{K}B_i = \bigcup_{M \in \mathcal{M}_i}B(M)$ for some collection of maps $\mathcal{M}_i$. Therefore, $\mathscr{T}_{X^{\mathbb{N}}}$ is closed under finite intersection. Finally, $\mathscr{T}_{X^{\mathbb{N}}}$ contains $X^{\mathbb{N}}$ and $\varnothing$ as its elements. That $\varnothing\in\mathscr{T}_{X^{\mathbb{N}}}$ holds follows from the fact that $\varnothing$ is the union of elements from the empty subcollection of $\mathcal{B}_{X^{\mathbb{N}}}$. Further, $X^{\mathbb{N}}$ is the union of elements from the full collection $\mathcal{B}_{X^{\mathbb{N}}}$. Thus, $\mathscr{T}_{X^{\mathbb{N}}}$ is a topology over $X^{\mathbb{N}}$.\\
		\indent Now, we argue: $\Pi_{X^{\mathbb{N}}}\subseteq\mathscr{T}_{X^{\mathbb{N}}}$. For this, fix an arbitrary $i_\ast\in\mathbb{N}$ and $A\subseteq X$. If $A=\varnothing$, then $\pi_{i_\ast}^{-1}(A)=\varnothing$. As $\varnothing\in\mathscr{T}_{X^{\mathbb{N}}}$, $\pi_{i_\ast}^{-1}(A)\in\mathscr{T}_{X^{\mathbb{N}}}$ if $A=\varnothing$. However, if $A\neq\varnothing$, then observe:
		\begin{equation*}
			\pi_{i_\ast}^{-1}(A)=\bigcup\big\{B(M):M\in X^{\{1,\hdots,i_\ast\}}\;;\;M(i_\ast)\in A\big\}.
		\end{equation*}
		\noindent Thus, if $A\neq\varnothing$, then $\pi_{i\ast}^{-1}(A)\in\mathscr{T}_{X^{\mathbb{N}}}$. That is, $\pi_i^{-1}(A)\in\mathscr{T}_{X^{\mathbb{N}}}$ for every $A\subseteq X$. Hence, $\{\pi_i^{-1}(A):i\in\mathbb{N}\;;\;A\subseteq X\}\subseteq\mathscr{T}_{X^{\mathbb{N}}}$ and we have already shown that $\mathscr{T}_{X^{\mathbb{N}}}$ is a topology over $X^{\mathbb{N}}$. Further, by definition, $\Pi_{X^{\mathbb{N}}}$ is the smallest topology that satisfies $\{\pi_i^{-1}(A):i\in\mathbb{N}\;;\;A\subseteq X\}\in\Pi_{X^{\mathbb{N}}}$. Therefore, we obtain: $\Pi_{X^{\mathbb{N}}}\subseteq\mathscr{T}_{X^{\mathbb{N}}}$.\\
		\indent Finally, we argue: $\mathscr{T}_{X^{\mathbb{N}}}\subseteq\Pi_{X^{\mathbb{N}}}$. For this, fix an arbitrary $I\in\mathbb{N}$ and consider an arbitrary map $M:\{1,\hdots,I\}\rightarrow X$. For each $i\in\{1,\hdots,I\}$, let $A_i\coloneqq\{M(i)\}$. Then, we have the following:
		\begin{equation*}
			B(M)=\bigcap\big\{\pi_i^{-1}(A_i):i=1,\hdots,I\big\}.
		\end{equation*}
		\noindent Since $\Pi_{X^{\mathbb{N}}}$ is a topology and $\{\pi_i^{-1}(A):i\in\mathbb{N}\;;\;A\subseteq X\}\subseteq\Pi_{X^{\mathbb{N}}}$, it follows that $B(M)\in\Pi_{X^{\mathbb{N}}}$. Thus, $\Pi_{X^{\mathbb{N}}}$ is a topology over $X^{\mathbb{N}}$ such that $\mathcal{B}_{X^{\mathbb{N}}}\subseteq\Pi_{X^{\mathbb{N}}}$. Moreover, $\mathscr{T}_{X^{\mathbb{N}}}$ is the smallest topology over $X^{\mathbb{N}}$ such that $\mathcal{B}_{X^{\mathbb{N}}}\subseteq\mathscr{T}_{X^{\mathbb{N}}}$. Hence, we conclude: $\mathscr{T}_{X^{\mathbb{N}}}\subseteq\Pi_{X^{\mathbb{N}}}$.\\
	\end{proof}
	Now to show $d$ is a stopping rule if and only if it is continuous, first, assume that $d:X^{\mathbb{N}}\rightarrow X$ is continuous. Fix an arbitrary $S_\ast\in X^{\mathbb{N}}$ and let $y_{S_\ast}\coloneqq d(S_\ast)$. Now, we know that $\{y_{S_\ast}\}$ is open in the discrete topology over $X$. By continuity of the map $d$, the following set:
	\begin{equation*}
		d^{-1}(\{y_{S_\ast}\})\coloneqq\{S\in X^{\mathbb{N}}:d(S)=y_{S_\ast}\}
	\end{equation*}
	\noindent satisfies $d^{-1}(\{y_{S_\ast}\})\in\Pi_{X^{\mathbb{N}}}$. By the lemma above  and the definition of $\mathscr{T}_{X^{\mathbb{N}}}$, there exists $M:\{1,\hdots,k\}\rightarrow X$ such that $S_\ast\in B(M)\subseteq d^{-1}(\{y_{S_\ast}\})$. Now, $S_\ast\in B(M)$ implies: $M=S_\ast\vert_k$ and $B(M)=\{S_\ast\vert_k\cdot T:T\in X^{\mathbb{N}}\}$. Since $B(M)\subseteq d^{-1}(\{y_{S_\ast}\})$, it follows: $d(S_\ast\vert_k\cdot T)=y_{S_\ast}$ for all $T\in X^{\mathbb{N}}$. Since $y_{S_\ast}=d(S_\ast)$ and $S_\ast$ was arbitrary, we have established: if the map $d:X^{\mathbb{N}}\rightarrow X$ is continuous, then it is a stopping rule.\\
	\indent Now, assume that $d:X^{\mathbb{N}}\rightarrow X$ is a stopping rule. Since $X$ has the topology $2^X$, we must argue that $d^{-1}(A)\coloneqq\{S\in X^{\mathbb{N}}:d(S)\in A\}\in\Pi_{X^{\mathbb{N}}}$ for any $A\subseteq X$. Since $d^{-1}$ preserves arbitrary unions and $\Pi_{X^{\mathbb{N}}}$ is closed under arbitrary unions, it is enough to argue that $d^{-1}(\{y\})\in\Pi_{X^{\mathbb{N}}}$ for any $y\in X$. So, fix an arbitrary $y_\ast\in X$. If $d^{-1}(\{y_\ast\})=\varnothing$, then we have nothing more to argue as $\varnothing\in\Pi_{X^{\mathbb{N}}}$. Hence, assume that $d^{-1}(\{y_\ast\})\neq\varnothing$. Consider an arbitrary $S_\ast\in d^{-1}(\{y_\ast\})$. Since $d$ is a stopping rule, there exists $k(S_\ast)\in\mathbb{N}$ such that: $d(S)=y_\ast$ for every $S\in B(S_\ast\vert_{k(S_\ast)})$. This is because $B(S_\ast\vert_{k(S_\ast)})=\{S_\ast\vert_{k(S_\ast)}\cdot T:T\in X^{\mathbb{N}}\}$. Thus, we have:
	\begin{equation*}
		\bigcup\big\{B(S\vert_{k(S)}:S\in d^{-1}(\{y_\ast\}))\big\}=d^{-1}(\{y_\ast\}).
	\end{equation*}
	\noindent Hence, $d^{-1}(\{y_\ast\})\in\mathscr{T}_{X^{\mathbb{N}}}$ by definition of $\mathscr{T}_{X^{\mathbb{N}}}$. By the lemma above, it follows that $d^{-1}(\{y_\ast\})\in\Pi_{X^{\mathbb{N}}}$. Since $y_\ast\in Y$ was arbitrary, we have: $d^{-1}(A)\in\Pi_{X^{\mathbb{N}}}$ for any $A\in 2^X$. Thus, if the $d:X^{\mathbb{N}}\rightarrow X$ is a stopping rule, then it is continuous. \qed
	
	\begin{center}
		\textsc{Proof of Theorem 4}
	\end{center}
	
	We first prove the necessity. Suppose $d$ is a Threshold based Stopping rule  with $v \in \mathbb{R}_+$ and $w : X \to \mathbb{R}_{++}$. Consider an arbitrary $S \in X^{\mathbb{N}}$ with $d(S)=x$. Then there exists $N_1 \in \mathbb{N}$ such that $W^{N_1}_S(x) \geq v >  W^{N_1}_S(y)$ for all $y \neq x$. Consider any $S' \in \mathcal{FS}(S,x)$ where $S'$ and $S$ differ on $k$ and $k+1^{\text{th}}$ position for some $k\in \mathbb{N}$. Suppose $N_1 = k+1 $ and $S(k) \neq x$ (and $S(k+1)=x$). In this case, we get  $W^{N_1-1}_{S'}(x) \geq v >  W^{N_1-1}_{S'}(y)$ for all $y \neq x$ and we have $d(S') =x$. For all other cases, we have $W^{N_1}_{S'}(x) \geq v >  W^{N_1}_{S'}(y)$ and therefore we get $d(S') =x$. A similar argument holds for any $S' \in \mathcal{FD}(S,x)$. Therefore $d(S) = d(S')$ for all $S' \in \mathcal{F}(S,x)$. Since $S$ was arbitrary, we have shown that $d$ satisfies Monotonicity. To show that $d$ satisfies Informational Dominance, consider a minimal sufficient segment $M$ with $d(M)=x$ and a sufficient segment $N$ such that $d(N)=y \neq x$ and $x \notin N$. Assume for contradiction that $d([M' \cdot N]) =x$ for some segment $M'$ such that $M \triangleright M'$. Since $M'$ is not sufficient, the cumulative weight of $x$ in $M'$ is less than $v$. Since $x \notin N$, the cumulative weight of $x$ in $[M'\cdot N]$ is less than $v$, a contradiction. Therefore $x \neq d([M'\cdot N])$.			
	
	Now, we prove the sufficiency. Let $d$ satisfy Monotonicity and Informational Dominance. First, we construct the ``revealed" critical frequency of each alternative. Fix $x \in X$. Note, by definition of a choice rule, $d(S^x)=x$ for the constant sequence $S^x = (x, x \ldots)$. Since $d$ is a stopping rule, there exists $k \in \mathbb{N}$ such that $d(S^x)= d([S^x|_k] \cdot T)$ for all $T \in X^{\mathbb{N}}$. Let $n_x \coloneqq \inf \{k \in \mathbb{N} : d(S^x)= d([S^x|_k] \cdot T) \ \forall T \in X^{\mathbb{N}}\}$. Since $\mathbb{N}$ is well-ordered, we know that $n_x \in \mathbb{N}$. Consider an arbitrary sequence $S$ with $d(S)=x$. For any $i \in \mathbb{N}$ and the segment $S|_i$ of $S$, denote by $\# x(S|_i)$ the number of appearances of $x$ in it. That is, $ \# x(S|_i) \coloneqq |\{j \in [i]: [S|_i](j) =x\}|$. Now, denote by $i(S,a)$ the position at which an alternative $a$ reaches $n_a$ appearances in $S$. That is, $i(S,a) \coloneqq \inf\{i\in \mathbb{N}: \#a(S|_i) = n_a \} $. In case alternative $a$ does not reach $n_a$ appearances in $S$, let $ i(S,a) = \infty$ with the convention that $n < \infty$ for all $n \in \mathbb{N}$. 
	
	We show that $i(S,x) < i(S,y)$ for all $y \neq x$. Assume for contradiction that $i(S,y)< i(S,x)$ for some $y \neq x$. Let $S'$ be a sequence generated from $S$ by deleting all the first terms in the first $i(S,x)$ positions that are not equal to $x$ or $y$. That is, $S'$ is generated by finitely many favorable deletions with respect to $x$ and $y$. By Monotonicity, we have $d(S')=x$. Note that first $i(S',x)$ terms contain $n$ number of $y$'s and $n_x$ number of $x$'s ($n + n_x = i(S',x)$ ) where $n \geq n_y$. Now, consider finitely many favorable shifts of $S'$ with respect to $x$ to generate $S''$ such that its first $n_x$ terms are all $x$ followed by $n$ terms that are $y$. Again, by Monotonicity, we have $d(S'')=x$. 
	Denote this initial segment of $x$'s as $[M \cdot x]$ where $M$ is  $(n_x-1)$ long segment of  $x$'s  and the $n$ long segment of $y$'s as $N$. So, we can write $S'' = [Mx\cdot N]\cdot T $ where $T \in X^{\mathbb{N}}$ and $T(j) = S'(i(S',x) +j)$ for all $j \in \mathbb{N}$.	By the definition of $n_x$ we know that there exists some $T \in X^{\mathbb{N}}$ such that $d(M\cdot T) \neq x$. Also, by the definition of $n_y$, we know that $d(N \cdot T) =y$ for all $T \in X^{\mathbb{N}}$. In other words, $Mx$ is a minimal sufficient segment, $M$ is not a minimal sufficient segment and $N$ is a sufficient segment. Using Informational Dominance, we know that $d([M \cdot N ]) \neq x$. It must be that $d([M \cdot N \cdot x \cdot T])=y$ for all $T \in X^{\mathbb{N}} $. Suppose not i.e. $d(M \cdot N \cdot x \cdot T)=z$ for some $z \neq x,y$ and $T \in X^{\mathbb{N}}$. Then, by Monotonicity, it must be that $d(NxT)= z$, a contradiction since $N$ contains $n_y$ first $y$'s. Therefore $d(M \cdot N \cdot x \cdot T)=y$ for all $ T \in X^{\mathbb{N}}$. Now, notice that we can generate the sequence $S'$ by successively moving $y$'s to the left i.e. finitely many favorable shifts with respect to $y$ and again, by Monotonicity, we have $d(S') = y$, a contradiction. Therefore, we get $i(S,x) < i(S,y)$. 
	
	Now, we define $w(x) \coloneqq \frac{1}{n_x}$ for all $x \in X$ and let $v =1$. Consider the computable choice rule $d^*$ such that $d^*(S)= \{x : W^N_S(x) \geq v > W^N_S(y)\} $
	for all $S \in X^{\mathbb{N}}$. Note that since $w(x) > 0$, $d^*$ is indeed a computable choice rule. We will show that $d^* = d$. Consider any arbitrary $S \in X^{\mathbb{N}}$ and let $d(S)= z$. We know that $i(S,z) < i(S,y)$ for all $y \neq z$. Let $i(S,z) = N$. By construction, we know that $W_S^N(z) =1$ and  $ W_S^N(z) \geq v > W_S^N(y)$ for all $y \neq z$ and therefore $d^*(S)=z$. Since $S$ was chosen arbitrarily, we have shown that $d^* = d$.	 \qed

	\begin{center}
		\textsc{Proof of Theorem 5}
	\end{center} 
	
	We will first show the necessity part. Suppose $d$ is OSR with $(\succ, a^*,k)$. First, note that $D = U(a^*)$ and length of a minimal sufficient segment $M$ is $k$ if $M \subset D'$. Also, the length of any minimal sufficient segments is not greater than $k$. Since $ \succ $ is a preference order, $d$ satisfies Sequential-$\alpha$ and Sequential-NBC. To show that $d$ satisfies Replacement, consider any arbitrary $M \in \mathscr{MS}_{D'}$  and $M'  \in \mathcal{S}_m$ such that $M(i)= x$, $M'(i)=y$ for some $i \in [m]$ and $M(j)=M'(j)$ for all $j \neq i$. W.L.O.G let $z$ be the $\succ$-maximal alternative in $M'$. For any $S \in X^{\mathbb{N}}$, we know that $d(M' \cdot S ) = z$ and therefore $M \in \mathscr{S}$ 
	
	Now, to show the sufficiency, suppose $d$ satisfies Sequential-$\alpha$, Sequential-NBC and Replacement. We will proceed in several steps as follows: 
	
	\textit{Step 1}: First, we show that $|M|= |M'|$\footnote{We abuse notation and denote the length of the segment $M$ by $|M|$.} for any $M, M' \in \mathscr{MS}_{D'}$. Suppose not. W.L.O.G let $|M| > |M'|$. Consider the segment $M|_{|M'|}$. Since $X$ is finite, we can reach from $M'$ to $M|_{|M'|}$ in finite number of ``steps" of replacement i.e. there exists a chain of segments $M_1, \ldots , M_n$ with $M_1 = M'$ and $M_n = M|_{|M'|}$ such that $|\{i : M_j(i) \neq M_{j+1}(i)\}| = 1$ for all $j \in \{1 , \ldots n-1\}$. In other words, $M_j$ and $M_{j+1}$ differ only in one position for all segments in the chain. By repeated application of Replacement, we know all the segments in the chain are sufficient and therefore $M|_{|M'|} \in \mathscr{S}$. Since $|M_{|M'|}| < |M|$ and $[M|_{|M'|}](i) = M(i)$ for all $i$, this is a contradiction to $M \in \mathscr{MS}$. We have established that all minimal sufficient segments that do not contain any decisive alternatives are of the same length. Let that length be denoted by $i^{D'}$. 
	
	\textit{Step 2}: Consider an arbitrary $M \in \mathscr{MS_{D}}$ and let $i^D \coloneqq \inf \{i \in \mathbb{N} : M(i) \in D\}$. That is, $i^D$ denotes the location of first occurrence of a decisive alternative in $M$. We will show that $|M| \leq i^{D'}$ and $|M| = i^D$. Suppose not i.e. there exists a $M \in \mathscr{MS}_{D}$ such that $|M| > i^{D'}$. Consider an arbitrary $M' \in \mathscr{MS}_{D'}$. By the previous step, we know that $|M'| = i^{D'}$. Since $X$ is finite, as in the previous step, consider a chain of segments $M_1 \ldots ,  M_n$ such that $M_1 = M'$ and $M_n = M|_{|M'|}$ such that every successive element in the chain differs by an alternative in exactly one position. By repeated application of Replacement, we know that $M_n$ is a sufficient segment. Since $|M_n| < M$ and $M_n(i) = M(i)$ for all $i$, this is a contradiction to $M \in \mathscr{MS}$. Therefore, $|M| \leq i^{D'}$. Now, we will show that $|M| = i^D$. Assume for contradiction that $|M| > i^D$ (note that the argument for the case $|M| < i^D$ is trivial by the definition of $\mathscr{MS}_{D}$). W.L.O.G, let $M(i^D)=x$. By the definition of $D$, we know that $d(M)=x$. Since $M \in \mathscr{MS}$, there exists a sequence $T$ such that $d([M|_{|M|-1}\cdot T]) \neq d(M)=x$. Let $\bar{M}$ be the minimal sufficient segment of the  sequence $[M|_{|M|-1}\cdot T]$. By the definition of $D$, $|\bar{M}| < i^D$. Since $|\bar{M}| <|M|$ and $\bar{M}(i) = M(i)$ for all $i$, we have a contradiction to $M \in \mathscr{MS}$. Therefore, we must have $|M| = i^D$.   
	
	\textit{Step 3}: Since $d$ is a computable rule (and therefore a stopping rule), it is completely specified by the choices on its minimal sufficient segments. By the previous two steps, we know that the length of any minimal sufficient segment is at most $i^{D'}$. Let $k = i^{D'}$. There are two possible cases:
	
	\noindent (i)  $k=1$. This implies that size of all minimal sufficient segment is 1. Further, this implies that every alternative is decisive i.e. $D =X$. Consider an arbitrary preference order $\succ$ on $X$ and let $a^* \coloneqq \min(X, \succ)$.\footnote{$\min (X, \succ) \coloneqq \{x \in X : y \succ x \ \forall y \in X\}$} It is easy to see that $(k,\succ, a^*)$ rationalize $d$. 
	
	\noindent (ii) $k \geq 2$ . Consider any $x,y \in D'$. Define $ \succ$ as follows: $x \succ y $ iff there exists $M \in \mathscr{MS}_{D'}$ such that $ x,y \in M$ and $d(M)=x$. We first show that $ \succ$ is a preference order over $D'$. Reflexivity follows from the definition. Assume for contradiction that $\succ$ is not antisymmetric. Then there exists distinct $x,y \in D'$ such that $x  \succ y$ and $y \succ x$. By definition, there exists $M , M' \in \mathscr{MS}_{D'}$ such that $x,y \in M$, $x,y \in M$, $d(M)=x$ and $d(M')=y$
	Consider $M'' \in \mathscr{MS_{D'}}$ such that $x,y \in M''$ and $z \notin M''$ for all $z \neq x,y$. That is, $M''$ consists of only $x$ and $y$ (such $M''$ exists due to the assumption that $k \geq 2$ and step 1). By Sequential-$\alpha$,  we have $d(M'')=x$ and $d(M'')=y$, a contradiction. Therefore $\succ$ is antisymmetric.  Now, consider any distinct $x,y \in D'$ and $M \in \mathscr{MS}_{D'}$ such that $x,y \in M$ and $z \notin M$ for all $z \neq x,y$. By definition of $\succ$ and antisymmetry we have either $x \succ y$ or $y \succ x$. Therefore $\succ$ is complete. To show $\succ$ is transitive, consider $x,y,z  \in D'$ and suppose $x \succ y$ and $y \succ z$. We consider two cases: 
	
	\begin{itemize}
		\item[(a)]$k=2$. We know that there exists $M,M' \in \mathscr{MS}_{D'}$ such that $y \in M$, $d(M)=x$ and $z \in M'$, $d(M')=y$. Consider $M'' \in \mathscr{MS}_{D'}$ such that $x,z \in M''$. Since $|M''|=2$, by Sequential-NBC, we know that $d(M'') \neq z$, implying $d(M'')=x$ giving us $x \succ z$.  
		\item[(b)] $k>2$. Consider $M \in \mathscr{MS}_{D'}$ such that $x,y,z \in M$ and $w \notin M$ for all $w \neq x,y,z$. By Sequential-$\alpha$, we know that $d(M) \neq z$ and $d(M) \neq y$. Therefore $d(M)= x$ implying $x \succ z$. 
	\end{itemize}   
	
	\textit{Step 4}: We have shown that $\succ$ is a preference order over $D'$. Now, we show that $D$ is non-empty. Assume for contradiction that $D$ is empty i.e. $X = D'$. By step 1, all minimal sufficient segments are of the same length. Since $\succ$ is a preference order over $X$, we have a unique maximal element. W.L.O.G let it be $x$. Consider an arbitrary minimal sufficient segment $M$ such that $x \in M$. Since $\succ$ is antisymmetric, we know that $d(M)=x$. Therefore, by definition of $D$, we must have $x \in D$, a contradiction.  Now, consider an arbitrary preference order  $ \bar{\succ}$ on $X$ such that $\succ \subset \bar{\succ}$ and $x \ \bar{\succ} \ y$ for all $x \in D$ and $ y \in D'$. 
	Let $a^* = \min(D,\bar{\succ})$. Now, we will show that $(k, \bar{\succ}, a^*)$ rationalize $d$. Consider an arbitrary $S \in X^{\mathbb{N}}$ with $d(S)=x$. There are two possible cases: (i) The segment of $S_k$ does not contain any alternative from $D$. That is $S|_k \in \mathscr{MS}_{D'}$. Suppose there exists $y \in S|_k$ with $y \neq x$ such that $y \succ x$. Then we have a contradiction to the antisymmetry of $\succ$. Therefore, by completeness of $\succ$ we have $x \succ y$ for all $ y \in S|_k$ (ii) The segment $S|_k$ contains at least one alternative from $D$. That is $M \in \mathscr{MS}_{D}$ for some $S|_k \triangleright M$. By step 2, we must have $x \in D$ and $x$ is the first alternative from $D$ to feature in $S|_k$ and we are done.  \qed

	\setlength{\bibsep}{0.2 cm} 
	\bibliographystyle{ecta} 
	\bibliography{bibliog.bib} 	
	
\end{document}